\newtheorem{theorem}{Theorem}[section]
\newtheorem{definition}[theorem]{Definition}
\newtheorem{lemma}
[theorem]{Lemma}{\theorembodyfont{\rmfamily}}{\theorembodyfont{\rmfamily}}
\newenvironment{proof}
{\begin{trivlist}\item[]{{\sc Proof.}}}{\hfill{$\square$}\noindent\end{trivlist}}
\newcommand{\od}{\operatorname{d}}
\newcommand{\R}{\mathsf{R}}
\newcommand{\V}{\mathsf{V}}
\begin{document}
\title{\textbf{Representation-Compatible Power Indices}}

\author{
Serguei Kaniovski\footnote{
Austrian Institute of Economic Research (WIFO), Austria. E-mail: {serguei.kaniovski@wifo.ac.at}.}
\and
Sascha Kurz\footnote{
Department of Mathematics, University of Bayreuth, Germany. E-mail: {sascha.kurz@uni-bayreuth.de}.}
}
\date{\today}
\maketitle

%%%%%%%%%%%%%%%%%%%%%%%%%%%%%%%%%%%%%%%%%%%%%%%%%%
\begin{abstract}
This paper studies power indices based on average representations of a weighted game. If restricted to 
account for the lack of power of dummy voters, average representations become coherent measures of voting 
power, with power distributions being proportional to the distribution of weights in the average 
representation. This makes these indices representation-compatible, a property not fulfilled by 
classical power indices. Average representations can be tailored to reveal the equivalence classes of
voters defined by the Isbell desirability relation, which leads to a pair of new power indices that 
ascribes equal power to all members of an equivalence class.

\bigskip
\noindent \emph{Keywords}: average representation; power index; proportionality between weights and power

%\bigskip
%\noindent \emph{JEL-Codes}: D72

%\bigskip
%\noindent \textit{Math. Subj. Class. (2000)}: 91A12, 91A80
\end{abstract}

\bigskip
\bigskip

%%%%%%%%%%%%%%%%%%%%%%%%%%%%%%%%%%%%%%%%%%%%%%%%%%
\section{Introduction}\label{sec:intro}
We commonly represent a weighted voting game using an integer-valued vector of voting weights 
$(w_1,w_2,\dots,w_n)$ and an integer quota $q$. The vector of weights conveys the number of 
votes each of the $n$ voters commands. The game involves each voter casting all her votes as 
a bloc either for or against a motion. The motion is passed if the total number of votes cast 
by the voters in favor of the motion is greater than or equal to $q$; otherwise, the motion 
is rejected. In this paper, we introduce two new measures of power for weighted games, and 
study the properties of a family of representation-compatible power indices that now includes 
four measures.

Written as $[q;w_1,w_2,\dots,w_n]$, a representation conveys the set of winning coalitions of 
the weighted game. For example, winning coalitions for a game represented by $[51;47,46,5,2]$ are
$$
\{\{1, 2\}, \{1, 3\}, \{2, 3\}, \{1, 2, 3\}, \{1, 2, 4\}, \{1, 3, 4\}, \{2, 3, 4\}, \{1, 2, 3, 4\}\}.
$$
This set allows us to compute the voting power of the voters. A voter is critical to a winning 
coalition if the coalition becomes a losing one should she withdraw her support. No voter is critical 
in a coalition of all voters $\{1, 2, 3, 4\}$. The first voter is critical in 
$\{1, 2\}$, $\{1, 2, 4\}$, $\{1, 3\}$ and $\{1, 3, 4\}$. The largest three voters acting 
together can pass a motion, but none of them would be critical to the success of the coalition 
$\{1, 2, 3\}$. Since each of the three large voters are critical in exactly four winning coalitions, 
they should be equally powerful. The smallest voter is a dummy (Definition~\ref{def_dummy}), because 
she is irrelevant to the success of any coalition. The above considerations suggest $(1/3,1/3,1/3,0)$ 
as a plausible distribution of voting power. Note that the three voters are equally powerful despite 
them unequal weights. The distribution of weights $(0.47,0.46,0.05,0.02)$ is markedly different from 
the distribution of power. This difference would decrease if we chose to represent the above game 
using $[51;34,33,33,0]$. It would completely disappear if we chose the rational-valued representation 
$[2/3;1/3,1/3,1/3,0]$.

Although the set of winning coalitions uniquely defines a weighted game, there are infinitely 
many representations consistent with any given set of winning coalitions. If we adhere to a total 
of 100 votes among three voters and a quota of 51, then there will be 8924 rational-valued weight 
distributions consistent with the power vector $(1/3,1/3,1/3,0)$. If the quota itself is considered 
to be part of the specification, then there will be 79800 possibilities to represent the game. Scaling 
the quota and weights in any of these representations by the same factor would preserve the set of 
winning coalitions and consequently define the same weighted game.

Since any admissible representation defines the game, the multiplicity of representations has no 
bearing on the definition of a weighted game. The multiplicity poses a problem if we want to design 
a weighted game (a voting institution such as a parliament) with a given distribution of power or to 
compare the distribution of power to the distribution of weights. Whereas imposing additional criteria 
can reduce the number of feasible representations and even identify a unique appropriate representation 
in sufficiently small, weighted games\footnote{We could minimize the total sum of integral voting weights 
to obtain a unique representation for $n\le7$, as in \citeasnoun{freixas_molin1}, or \citeasnoun{freixas_kurz1}. 
For other options, see Application~9.9 in \citeasnoun{crama_hammer1}.}, the multiplicity makes unclear which 
representation the power distribution should be compared to. The average representations can reasonably be 
compared to power distributions of various power indices, as they uniquely summarize the set of admissible 
representations.

The proportionality of power and weight has received significant attention in the literature. It has been 
noted that none of the classical power indices yield power distributions that coincide with weight 
distributions for all weighted games. The observation that the distribution of voting power is different 
from the distribution of voting weight has been one of the motivating factors behind the development of 
the theory of power measurement in weighted games \cite{felsen_mach2}. Two recent theoretical studies 
provide conditions for which the weight distribution and the power distribution coincide. These results 
are available for the Banzhaf index by \citeasnoun{houy_zwicker1} and the nucleolus 
by \citeasnoun{kurz_napel_nohn1}. An exception is the recently introduced Minimum Sum Representation 
Index (MSRI) index by \citeasnoun{freixas_kan1}, which is specifically designed to fulfill proportionality.

The average representations come very close to being valid measures of power for weighted games. If 
restricted to account for the lack of power of dummy voters, average representations become coherent 
measures of voting power, with power distributions being proportional to the distribution of weights 
in the average representation. Restricting the polytope implied by the set of minimal winning and 
maximal losing coalitions yields average representations that are dummy-revealing. These restricted 
average representations satisfy \possessivecite{freixas_gamba1} coherency criteria for power indices, 
which are essentially equivalent to the widely accepted `minimal adequacy postulate' 
by \citeasnoun{felsen_mach1} (p.~222). These indices of voting power, called AWI and ARI, are based 
on restricted average representations that respect proportionality between power and weight \cite{kan_kurz1}.

The above modification of an average representation suggests that we can endow the indices with 
further qualities by tailoring the polytope. Restrictions based on the equivalence classes of voters 
defined by the Isbell desirability relation lead to another pair of power indices that ascribes equal 
power to all members of an equivalence class. The two new measure of voting power introduced in this 
paper are specifically designed to recognize the equivalence of voters. We thus propose two new power 
indices that respect proportionality between power and weight, are dummy-revealing as well as 
type-revealing. Together with AWI and ARI proposed in \citeasnoun{kan_kurz1}, the new type-revealing 
indices, called AWTI and ARTI, complete the family of representation-compatible power measures studied 
in this paper.

In the next section, we recall the preliminaries required to define representation-compatible power 
indices. Section~\ref{sec:definition} defines the indices, verifies their coherency as measures of 
power and discusses computational issues. The computation of representation-compatible indices involves 
the integration of monomials on highly-dimensional polytopes with rational vertices. The main drawback of 
these indices is the computational burden of numerical integration.\footnote{Programs for computing the 
indices are available from the authors upon request.} Section~\ref{sec:properties} first compares the 
power distributions generated by representation-compatible indices to power distributions according to 
the \citeasnoun{banzh1} and \citeasnoun{shapl_shub1} indices in small weighted games, and then discusses 
their vulnerabilities to certain anomalies, commonly referred to as voting paradoxes. 
Section~\ref{sec:integral_weights} discusses some aspects of the integer-valued representations that 
have been recently used by \citeasnoun{freixas_kan1} to construct a power index. It turns out that average 
representations and the MSRI are related. The final section offers concluding remarks and ideas for future 
research.

%%%%%%%%%%%%%%%%%%%%%%%%%%%%%%%%%%%%%%%%%%%%%%%%%%
\section{Notation and preliminaries}\label{sec:notation}
\subsection{Simple games and weighted games}\label{subsec:simplegames}
A (monotonic) simple game is the most general type of binary voting game.
\begin{definition}\label{def_simple_game}
A simple game $v$ is a mapping $v:2^n\rightarrow \{0,1\}$, where $N=\{1,\dots,n\}$ is the set of voters, 
such that $v(\emptyset)=0$, $v(N)=1$, and $v(S)\le v(T)$ for all $S\subseteq T\subseteq N$ (monotonicity). 
\end{definition}
A subset $S\subseteq N$ is called a coalition of $v$. There are $2^n$ such coalitions in a simple game 
with $n$ voters. A coalition $S$ is winning if $v(S)=1$, and losing if $v(S)=0$. The monotonicity ensures 
that enlarging a winning coalition cannot make it a losing one, which is a sensible assumption.

A winning coalition $S$ is called a \emph{minimal winning coalition} if none of its proper subsets are 
winning. Similarly, a losing coalition $T$ is called a \emph{maximal losing coalition} if none of its 
proper supersets are losing. The set of minimal winning coalitions $\mathcal{W}^m$, or the set of maximal 
losing coalitions $\mathcal{L}^m$, uniquely defines a simple game. For the game represented by 
$[51;47,46,5,2]$, the set of minimal winning coalitions is given by $\{\{1,2\},\{1,3\},\{2,3\}\}$. 
We define a simple game using the set of minimal winning coalitions as opposed to the set of winning 
coalitions, as the former definition is more compact.

A weighted game is a simple game that admits a representation $[q;w_1,w_2,\dots,w_n]$.
\begin{definition}\label{def_weighted_majority_game}
A simple game $v$ is weighted, if there exist real numbers $w_1,\dots,w_n\ge 0$ and $q>0$, such that
$$
\sum_{s\in S} w_s\ge q\quad\Longleftrightarrow\quad v(S)=1,
$$
for all $S\subseteq N$. We write: $(N,v)=[q;w_1,\dots,w_n]$.
\end{definition}
In this paper, we consider weighted games, as this type of binary voting games is most relevant to the 
applied power measurement and institutional design. A common institution that uses weighted voting for 
decision making is the shareholder assembly in a corporation. The voting weight of a shareholder equals 
the number of ordinary shares she holds. This example also includes voting by the member states in 
multilateral institutions such as the World Bank and the IMF. In the political arena, voting in 
parliaments can be viewed as a weighted game, provided party discipline is absolute. The frequently 
studied voting in the Council of Ministers of the EU can be viewed, with some simplification of the 
double-majority voting rule stipulated by the Lisbon Treaty, as a weighted game. In the examples above, 
the voting weights are non-negative integers. The conditions required for a simple game to be a weighted 
game have been studied extensively in the literature.\footnote{See, \citeasnoun{taylor_zwick1}. For a 
survey, see Chapter 9.8 in \citeasnoun{crama_hammer1}.}

\subsection{Equivalence classes of voters}\label{subsec:equiv}
The equivalence classes serve two purposes. They partition the set of voters according to their effect 
on the decisiveness of coalitions. Any reasonable measure of voting power should, therefore, recognize 
the equivalence classes. Second, while each majority game has an infinite number of representations, 
the number of possible partitions of all games with a given number of voters is finite. The 
qualifier `for all games' then stands for `all feasible partitions of players in classes'. Our 
comparisons between power indices presented in Section~\ref{sec:properties} were obtained using this 
set of games, where each game is defined by its minimum sum representation.\footnote{See, \citeasnoun{FrPo10TD}.}
\begin{definition}\label{def_equivalence_classes}
Given a simple game $v$, we say that two voters $i,j\in N$ are equivalent, denoted by $i\simeq j$, 
if we have $v(S\cup\{i\})=v(S\cup\{j\})$ for all $S\subseteq N\backslash\{i,j\}$. 
\end{definition}
The relation $\simeq$ is an equivalence relation and partitions the set of voters $N$ into, say $t$, 
disjoint subsets $N_1,\dots,N_t$ -- the equivalence classes of voters. Roughly speaking, adding voter 
$i$ instead of voter $j$ to any coalition $S$ will have the same or better effect on its decisiveness, 
making $i$ a more desirable addition for the voters comprising $S$. The following three types of voters 
deserve special attention.
\begin{definition}\label{def_dummy}
Given a simple game $v$, a voter $i\in S$ with $v(S)=v(S\cup\{i\})$ for all $S\subseteq N\backslash\{i\}$ 
is called a dummy.
\end{definition}
A dummy has no bearing on the success of a coalition she is a member of, and is, therefore, powerless.
\begin{definition}\label{def_vetoer}
Given a simple game $v$, a voter $i\in N$ such that $i$ is contained in all minimal winning coalitions 
is called a vetoer. 
\end{definition}
Any voter in a minimal winning coalition is critical to the success of the coalitions. This means that a 
voter present in all minimal winning coalitions has the power of a veto.
\begin{definition}\label{def_dictator}
Given a simple game $v$, a voter $i\in N$ such that $\{i\}$ is the unique minimal winning coalition is 
called a dictator.
\end{definition}
Being a dictator is the strongest form of having a veto. A dictator has all the power, rendering all 
other voters dummies.

Let us now recall some well-known facts about representations of weighted games:
\begin{lemma}\label{lemma_representations}
Each weighted game $v$ admits a representation $(q,w_1,\dots,w_n)$ with $w_1,\dots,w_n\ge 0$, $q>0$, and 
\begin{enumerate}
\item[(1)] $\sum_{i=1}^n w_i=1$, $q\in(0,1]$;
\item[(2)] $\sum_{i=1}^n w_i=1$, and $w_i=0$ for all dummies $i\in N$;
\item[(3)] $q\in\mathbb{N}$, $w_i\in \mathbb{N}$;
\item[(4)] $q\in\mathbb{N}$, $w_i\in \mathbb{N}$, $w_i=w_j$ for all $i\simeq j$, and $w_i=0$ for all 
dummies $i\in N$.
\end{enumerate}
\end{lemma}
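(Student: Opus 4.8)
The plan is to treat the four claims in order, exploiting the fact that the set of valid representations $(q,w_1,\dots,w_n)$ of a fixed weighted game $v$ is cut out by the linear constraints $\sum_{i\in S}w_i\ge q$ for $S\in\mathcal{W}^m$ and $\sum_{i\in T}w_i<q$ for $T\in\mathcal{L}^m$, together with $w_i\ge 0$ and $q>0$. For (1) I would start from any representation guaranteed by Definition~\ref{def_weighted_majority_game} and observe that its total weight $W=\sum_i w_i$ is strictly positive, since otherwise no coalition, in particular $N$, could be winning, contradicting $v(N)=1$. Dividing $q$ and every $w_i$ by $W$ rescales both sides of each defining inequality by the same positive factor, hence preserves the game, and yields $\sum_i w_i=1$. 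Because $v(N)=1$ forces $W\ge q$, the rescaled quota lies in $(0,1]$.

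For (2) I would pass to the subgame induced on the non-dummy voters. Writing $D$ for the set of dummies and $N'=N\setminus D$, the first step is to check that $v(S)=v(S\cap N')$ for every $S\subseteq N$: removing the dummies of $S$ one at a time leaves the value unchanged by Definition~\ref{def_dummy}. Consequently the restriction $v'(S)=v(S)$ for $S\subseteq N'$ is itself weighted (the restricted weights represent it), so applying (1) to $v'$ gives a representation with $\sum_{i\in N'}w_i=1$. Extending by $w_i=0$ for $i\in D$ then represents $v$, because for any $S$ the dummies contribute nothing to $\sum_{i\in S}w_i$ while $v(S)=v(S\cap N')$.

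For (3) the task is to replace real weights by integers. Because there are only finitely many coalitions, any real representation has a strictly positive gap between the smallest weight of a winning coalition and the largest weight of a losing one; scaling the whole representation up makes this gap at least $1$. This shows that the system given by $\sum_{i\in S}w_i\ge q$ for $S\in\mathcal{W}^m$, $\sum_{i\in T}w_i\le q-1$ for $T\in\mathcal{L}^m$, and $w_i\ge0$, $q\ge1$, is feasible over $\mathbb{R}$. As this is a polyhedron defined by inequalities with integer coefficients, feasibility over $\mathbb{R}$ yields a rational feasible point, and clearing denominators produces an integral $(q,w)$ that still separates winning from losing coalitions, hence represents $v$.

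The heart of the argument is (4), where I would symmetrize over the equivalence classes. The key observation is that whenever $i\simeq j$ the transposition swapping $i$ and $j$ is an automorphism of $v$: Definition~\ref{def_equivalence_classes} handles coalitions containing exactly one of $i,j$, while the remaining coalitions are fixed by the transposition. Hence the whole group $G=\prod_{k=1}^t\mathrm{Sym}(N_k)$ of permutations preserving each equivalence class acts by automorphisms, and for any valid representation and any $\sigma\in G$ the permuted weights $w^\sigma_i=w_{\sigma^{-1}(i)}$, with $q$ unchanged, are again valid. Starting from the integral, dummy-vanishing representation obtained by combining (2) and (3) --- here one uses that the dummies form a single equivalence class, so they are never mixed with non-dummies under $G$ --- I would average over $G$. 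Convexity of the constraint set keeps the averaged weights valid; they are constant on each class and still vanish on dummies; and multiplying through by $|G|$ restores integrality. The main obstacle I anticipate is precisely this last coordination: arranging that symmetrization, integrality, and the vanishing of dummy weights hold simultaneously rather than being achievable only one at a time. Verifying that averaging disturbs neither the integral structure nor the dummy property, and that the transposition-automorphism claim is watertight, is where the care is needed.
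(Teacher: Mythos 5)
Your argument is correct, but there is little in the paper to compare it against: Lemma~\ref{lemma_representations} is introduced with the words ``let us now recall some well-known facts'' and is given no proof, so your write-up in effect supplies the missing argument rather than paralleling or diverging from one. On its merits: parts (1)--(3) follow the standard route --- rescaling by the total weight $W\ge q>0$; the identity $v(S)=v(S\cap N')$ obtained by stripping dummies one at a time, which lets you represent the dummy-reduced game and extend by zeros; and, for integrality, the positive winning/losing gap plus the fact that a nonempty polyhedron with rational data contains a rational point (you assert this; it merits a one-line appeal to Fourier--Motzkin elimination or to basic solutions of linear programs; also note that what you need scaled to at least $1$ is $q-\max_{T\in\mathcal{L}^m}w(T)$, not the winning/losing weight gap per se, though both are positive so pure scaling does work). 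Part (4), the only part with real content, is sound as sketched: transpositions of equivalent voters are automorphisms of $v$ (coalitions containing both or neither of $i,j$ are fixed as sets, the remaining case is exactly Definition~\ref{def_equivalence_classes}), permuting the weights of a representation by an automorphism yields another representation with the same quota, the validity constraints --- including the strict inequalities $w(T)<q$ --- are preserved under convex combination with $q$ held fixed, and multiplying the averaged vector and the quota by $|G|$ restores integrality. One assertion you should prove rather than state is that the equivalence class of a dummy contains only dummies, since this is what guarantees that $G$ preserves the dummy set and averaging does not disturb the zero weights: if $i\simeq j$ and $j$ is a dummy, then for $S\subseteq N\setminus\{i\}$ with $j\in S$ and $S'=S\setminus\{j\}$ one has $v(S\cup\{i\})=v(S'\cup\{i\}\cup\{j\})=v(S'\cup\{i\})=v(S'\cup\{j\})=v(S)$, and for $j\notin S$ one has $v(S\cup\{i\})=v(S\cup\{j\})=v(S)$, so $i$ is a dummy as well. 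Likewise, ``combining (2) and (3)'' should be made concrete (apply (3) to the dummy-reduced game $v'$ and extend by zeros, exactly as in your proof of (2)). With these two small additions the proof is complete and self-contained.
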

We call (1) a normalized representation, and (3) an integer representation. Whenever we have $w_i=w_j$ 
for all $i\simeq j$, we say that the representation is \emph{type-revealing}. A representation 
with $w_i=0$ for all dummies $i\in N$ is called \emph{dummy-revealing}. Given a general (integer) 
representation, the problem of verifying that a voter is a dummy is coNP-complete (Theorem 4.4 in 
\citeasnoun{chalk_elk_wool1}). The generating functions offer an efficient way of finding dummy 
players in weighted voting games \cite{bilb_fern_los_lop2}.

\subsection{Coherent power measures}\label{subsec:coherency}
Let $\mathcal{S}_n$ denote the set of simple games on $n$ voters, and $\mathcal{W}_n\subset\mathcal{S}_n$ 
the set of weighted games on $n$ voters.
\begin{definition}\label{def_power_index}
A power index for $\mathcal{C}\in\{\mathcal{S}_n,\mathcal{W}_n\mid n\in\mathbb{N}\}$ is a mapping 
$g:\mathcal{C}\to\mathbb{R}^n$, where $n$ denotes the number of voters in each game of $\mathcal{C}$.
\end{definition}
We define a vector-valued power index by defining its element $g_i$, the voting power of voter $i$. 
A power index should satisfy the following essential properties:
\begin{definition}\label{def_coherency}
Let $g:\mathcal{C}\rightarrow \mathbb{R}^{|N|}=(g_i)_{i\in N}$ be a power index for $\mathcal{C}$. 
We say that
\begin{enumerate}
\item[(1)] $g$ is symmetric if for all $v\in\mathcal{C}$ and any bijection $\tau:N\rightarrow \tau$ 
we have $g_{\tau(i)}(\tau v)=g_i(v)$, where $\tau v(S)=v(\tau(S))$ for all $S\subseteq N$;
\item[(2)] $g$ is positive if $g_i(v)\ge 0$ and $g(v)\neq 0$ for all $v\in\mathcal{C}$;
\item[(3)] $g$ is efficient if $\sum_{i=1}^n g_i(v)=1$ for all $v\in\mathcal{C}$;
\item[(4)] $g$ satisfies the dummy property if for all $v\in\mathcal{C}$ and all dummies $i$ of 
$v$ we have $g_i(v)=0$.
\end{enumerate}
\end{definition}
Any positive power index $g$ can be made efficient by rescaling: $g'_i(v)=g_i(v)/\sum_{i=1}^n g_i(v)$. 
Rescaling turns the Penrose-Bazhaf absolute measure into the Banzhaf index. The Banzhaf index and the 
Shapley-Shubik index have all the above properties.

In addition to the above properties, any reasonable measure of voting power should recognize the 
equivalence classes of voters. To formalize this property, we need the notion of desirability 
introduced in \citeasnoun{isbell1}:
\begin{definition}\label{def_des_relation}
Given a simple $v$, we write $i\succeq j$, if we have $v(S\cup\{i\})\ge v(S\cup\{j\})$ for all 
$S\subseteq N\backslash\{i,j\}$ and say that voter~$i$ is at least as desirable as voter $j$. 
\end{definition}
We can have $i\succeq j$ and $j\succeq i$, if and only if $i\simeq j$. In this case, voters $i$ 
and $j$ are equivalent in the sense of belonging to the same equivalence class. We say $i\succ j$, 
if $i\succeq j$ and $i\not\simeq j$. In an arbitrary simple game, we can have $i\not\succeq j$ and 
$j\not\succeq i$. In this case, the two voters $i,j\in N$ are incomparable. To exclude this 
possibility, a class of games narrower than simple games but still more general than weighted 
voting games has been proposed by \citeasnoun{isbell1} and elaborated in \citeasnoun{taylor_zwick1}.
\begin{definition}\label{def_complete_game}
A simple game $v$ is called complete, if we have $i\succeq j$ or $j\succeq i$ (including both 
possibilities) for all voters $i,j\in N$.
\end{definition}
\citeasnoun{taylor_pacel1} offer a test of completeness. A simple game is complete if it is 
swap robust, or if a one-for-one exchange of players between any two winning coalitions $S$ 
and $T$ leaves at least one of the two coalitions winning. One of the players in the swap 
must belong to $S$ but not $T$, and the other must belong to $T$ but not $S$.

It is important to emphasize that all weighted games are complete, so that the $\succeq$-relation 
induces a complete, or total, ordering of the voters. Given a 
representation $(q,w_1,\dots,w_n)$, $w_i\ge w_j$ implies $i\succeq j$, and $w_i=w_j$ 
implies $i\simeq j$. The implication $i\succ j$ from $w_i>w_j$ is only valid if the given 
representation preserves types formed by the partition of voters according to the equivalence 
relationship.

Being simple, a complete game can be defined by the set of minimal winning coalitions. For 
complete games, however, a still more parsimonious definition based on shift-minimal winning
and shift-maximal losing coalitions is available.
\begin{definition}\label{def_shift_coalitions}
Let $v$ be a complete simple game, where $1\succeq 2\succeq \dots\succeq n$, and $S\subseteq N$ 
be a coalition. A coalition $T\subseteq N$ is a direct left-shift of $S$ whenever there exists a 
voter $i\in S$ with $i-1\notin S$ such that $T=S\backslash\{i\}\cup\{i-1\}$ for $i>1$ or 
$T=S\cup\{n\}$ for $n\notin S$. Similarly, a coalition $T\subseteq N$ is a direct right-shift 
of $S$ whenever there exists a voter $i\in S$ with $i+1\notin S$ such that 
$T=S\backslash\{i\}\cup\{i+1\}$ for $i<n$ or $T=S\backslash\{n\}$ for $n\in S$.

A coalition $T$ is a left-shift of $S$, if it arises as a sequence of direct left-shifts. 
Similarly, it is a right-shift of $S$ if it arises as a sequence of direct right-shifts. 
A winning coalition $S$ such that all right-shifts of $S$ are losing is called shift-minimal 
winning. Similarly, a winning coalition $S$ such that all left-shifts of $S$ are winning is 
called shift-maximal losing. 
\end{definition}
A complete game is uniquely defined by either the set of shift-minimal winning coalitions, 
or the set of shift-maximal losing coalitions. The minimal winning coalitions of the game 
$[51;47,46,5,2]$ discussed in the introduction are $\{\{1, 2\}, \{1, 3\}, \{2, 3\}\}$, while 
the maximal losing coalitions are $\{\{1, 4\}, \{2, 4\}, \{3, 4\}\}$. In this example, all 
minimal and maximal coalitions are also shift-minimal and shift-maximal. This is not the case 
in general. The shift coalitions form subsets of the set of their respective winning and losing 
coalitions. For example, the maximal losing coalitions of the game $[5;3,2,2,1]$ are given by 
$\{\{1,4\}, \{2,3\}, \{2,4\}, \{3,4\}\}$, yet only the former two coalitions are shift-maximal losing.

\begin{definition}\label{def_strong_monotonicity}
A power index $g:\mathcal{C}\rightarrow \mathbb{R}^{|N|}=(g_i)_{i\in N}$ for $\mathcal{C}$ satisfies 
strong monotonicity if we have $g_i(v)>g_j(v)$ for all $v\in\mathcal{C}$ and all voters with $i\succ j$ 
in $v$.
\end{definition}
According to \citeasnoun{freixas_gamba1}, a power index is coherent if it satisfies the four 
properties of Definition~\ref{def_coherency} and is strongly monotonic. Strongly monotonicity 
ensures that the power index recognizes the equivalence classes of voters.

\subsection{Representation-compatibility}\label{subsec:repcompatibility}
In the next section, we introduce two power indices for weighted games that respects the 
proportionality of power and weight. We call such power indices \emph{representation-compatible}.
\begin{definition}\label{def_representation_compatible}
A power index $g:\mathcal{W}_n\rightarrow\mathbb{R}^n$ for weighted games on $n$ voters is 
called represent-\newline ation-compatible if $(g_1(v),\dots,g_n(v))$ is feasible for all 
$v\in\mathcal{W}_n$.
\end{definition}
The existing power measures are not representation-compatible in general. For example, the 
Banzhaf index (BZI) and the Shapley-Shubik index (SSI) are representation-compatible for 
$n\le3$ only. Table~\ref{tab1} compares all the weighted games with up to three voters in 
minimum sum integer representations to the respective power distribution according to the 
two measures.

\renewcommand*{\arraystretch}{1.4}
\begin{longtable}[c]{p{2cm}p{2cm}p{2cm}p{2cm}p{2cm}p{2cm}}
\caption[]{\begin{minipage}[c]{13.5cm} Representation-compatibility of the BZI and the SSI for
$n\le3$.\end{minipage}}\label{tab1}\endhead\hline\hline
Game & BZI & SSI & Game & BZI & SSI\\\hline
$[1;1]$ & $\left[1;1\right]$ & $\left[1;1\right]$ & $[2,1,1,0]$ & $\left[\frac{6}{6};\frac{3}{6},\frac{3}{6},\frac{0}{6}\right]$ & $\left[\frac{6}{6};\frac{3}{6},\frac{3}{6},\frac{0}{6}\right]$\\
$[1;1,0]$ & $\left[\frac{2}{2};\frac{2}{2},\frac{0}{2}\right]$ & $\left[\frac{2}{2};\frac{2}{2},\frac{0}{2}\right]$ & $[1;1,1,1]$ & $\left[\frac{2}{6};\frac{2}{6},\frac{2}{6},\frac{2}{6}\right]$ & $\left[\frac{2}{6};\frac{2}{6},\frac{2}{6},\frac{2}{6}\right]$\\
$[1;1,1]$ & $\left[\frac{1}{2};\frac{1}{2},\frac{1}{2}\right]$ & $\left[\frac{1}{2};\frac{1}{2},\frac{1}{2}\right]$ & $[2;1,1,1]$ & $\left[\frac{4}{6};\frac{2}{6},\frac{2}{6},\frac{2}{6}\right]$ & $\left[\frac{4}{6};\frac{2}{6},\frac{2}{6},\frac{2}{6}\right]$\\
$[2,1,1]$ & $\left[\frac{2}{2};\frac{1}{2},\frac{1}{2}\right]$ & $\left[\frac{2}{2};\frac{1}{2},\frac{1}{2}\right]$ & $[3;1,1,1]$ & $\left[\frac{6}{6};\frac{2}{6},\frac{2}{6},\frac{2}{6}\right]$ & $\left[\frac{6}{6};\frac{2}{6},\frac{2}{6},\frac{2}{6}\right]$\\
$[1;1,0,0]$ & $\left[\frac{6}{6};\frac{6}{6},\frac{0}{6},\frac{0}{6}\right]$ & $\left[\frac{6}{6};\frac{6}{6},\frac{0}{6},\frac{0}{6}\right]$ & $[3;2,1,1]$ & $\left[\frac{4}{5};\frac{3}{5},\frac{1}{5},\frac{1}{5}\right]$ & $\left[\frac{5}{6};\frac{4}{6},\frac{1}{6},\frac{1}{6}\right]$\\
$[1;1,1,0]$ & $\left[\frac{3}{6};\frac{3}{6},\frac{3}{6},\frac{0}{6}\right]$ & $\left[\frac{3}{6};\frac{3}{6},\frac{3}{6},\frac{0}{6}\right]$ & $[2;2,1,1]$ & $\left[\frac{2}{5};\frac{3}{5},\frac{1}{5},\frac{1}{5}\right]$ & $\left[\frac{2}{6};\frac{4}{6},\frac{1}{6},\frac{1}{6}\right]$\\\hline\hline
\end{longtable}

For $n\ge 4$, one can easily find examples in which the Shapley-Shubik power vector is not 
representation-compatible. For example, take the representation $[3;2,1,1,1]$. The corresponding 
Shapley-Shubik power vector is given by $\left(\frac{1}{2},\frac{1}{6},\frac{1}{6},\frac{1}{6}\right)$. 
Since $\{2,3,4\}$ is a winning coalition with weight $\frac{1}{2}$, and $\{1\}$ is a losing coalition 
with weight $\frac{1}{2}$, the Shapley-Shubik power vector cannot be a representation of the game. The 
same counter-example also applies for the Banzhaf index, since in this game the two power vectors 
coincide. \citeasnoun{houy_zwicker1} characterize the set of representations that is compatible with 
the Banzhaf index in a general weighted game.

It is not a coincidence that some power vectors in Table~\ref{tab1} occur several times. This follows 
from duality.
\begin{definition}\label{def_duality}
Let $v:2^N\rightarrow\{0,1\}$ be a simple game and $\mathcal{W}$ its set of winning coalitions, 
$\mathcal{L}$ its set of losing coalitions. By $v^d:2^N\rightarrow\{0,1\}$, with $v^d(S)=1-v(N\backslash S)$ 
for all $S\subseteq N$, we denote its dual game.
\end{definition}

The Shapley-Shubik power vector, as well as the Banzhaf vector, of a simple game $v$ coincides with 
that of its dual game $v^d$.\footnote{See, for example, the discussion in Chapter 6.2 in
 \citeasnoun{felsen_mach1}.} The result follows because $v$ and $v^d$ may also coincide. A 
weighted representation for the dual game can be obtained from a representation of the original 
game:
\begin{lemma}\label{lemma_dual_weights}
Let $v$ be a weighted game with integer representation $(q,w_1,\dots,w_n)$, and let 
$w(S)=\sum_{i\in S}w_i$, then
$$
\left(w(N)-q+1,w_1,\dots,w_n\right)
$$
is an representation of its dual game $v^d$.
\end{lemma}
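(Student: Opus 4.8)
The plan is a direct verification that the proposed quota and weight vector reproduce the winning coalitions of $v^d$. Write $q'=w(N)-q+1$; I need to show that for every $S\subseteq N$ one has $v^d(S)=1$ if and only if $w(S)\ge q'$.

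First I would unwind the definition of the dual game. By Definition~\ref{def_duality}, $v^d(S)=1-v(N\backslash S)$, so $v^d(S)=1$ is equivalent to $v(N\backslash S)=0$. Since $(q,w_1,\dots,w_n)$ represents $v$ (Definition~\ref{def_weighted_majority_game}), the latter holds exactly when $w(N\backslash S)<q$. Using additivity of the weights across the complement, $w(N\backslash S)=w(N)-w(S)$, this inequality becomes $w(N)-w(S)<q$, i.e.\ $w(S)>w(N)-q$.

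The one point that genuinely uses the integer hypothesis is passing from this strict inequality to the desired non-strict threshold. Because $q$ and all the $w_i$ are integers, both $w(S)$ and $w(N)-q$ are integers, so $w(S)>w(N)-q$ is equivalent to $w(S)\ge w(N)-q+1=q'$. Chaining the equivalences yields $v^d(S)=1\iff w(S)\ge q'$, which is precisely the claim. I would also record that $q'$ is a legitimate quota: since $v(N)=1$ forces $w(N)\ge q$, we get $q'=w(N)-q+1\ge 1>0$, and as $q'$ and the $w_i$ are integers, the output is again an integer representation.

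There is no deep obstacle in this argument; the only step requiring care is the strict-to-non-strict conversion, which is exactly where integrality enters. Without integer weights one would only recover the dual equipped with a strict-inequality acceptance rule, so the clean shift of the quota by $+1$ is special to integer representations.
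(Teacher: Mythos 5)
Your proof is correct, and since the paper states Lemma~\ref{lemma_dual_weights} without proof (recalling it as a known fact), your direct verification via $v^d(S)=1\iff w(N\backslash S)<q\iff w(S)>w(N)-q$ is exactly the standard argument the paper implicitly relies on. You correctly isolate the one essential point --- integrality of $q$ and the $w_i$ turning the strict inequality into $w(S)\ge w(N)-q+1$ --- and your added check that $q'=w(N)-q+1\ge 1>0$ makes the output a legitimate integer representation in the sense of Definition~\ref{def_weighted_majority_game}.
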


%%%%%%%%%%%%%%%%%%%%%%%%%%%%%%%%%%%%%%%%%%%%%%%%%%
\section{Representation-compatible power indices}\label{sec:definition}
The power indices studied in this paper use the following notions of feasibility and 
representation-compatibility. The first notion applies to a normalized vector of voting weights, 
whereas the second notion applies to a representation.
\begin{definition}\label{def_feasibility}
Given a weighted game $v$, a vector $(q,w_1,\dots,w_n)$ is a representation of $v$ if 
$v=[q;w_1,\dots,w_n]$. A weight vector $(w_1,\dots,w_n)$ is called feasible for $v$ if 
there exists a quota $q$ such that $(q;w_1,\dots,w_n)$ is a representation of $v$.
\end{definition}
For a normalized vector of weights to be feasible, it must fulfill the linear inequality 
constraints imposed by the set of minimal winning coalitions and the set of maximal losing 
coalitions.
\begin{lemma}\label{lemma_set_of_normalized_feasible_weights}
The set of all normalized weight vectors $w\in\mathbb{R}^n_{\ge 0}$, $\sum_{i=1}^n w_i=1$ 
being feasible for a given weighted game $v$ is given by the intersection
$$
\sum_{i\in S} w_i>\sum_{i\in T} w_i
$$
for all pairs $(S,T)$, where $S$ is a minimal winning and $T$ is a maximal losing coalition 
of $v$.
\end{lemma}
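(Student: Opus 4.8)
The plan is to translate feasibility into a single gap condition between the weights of winning and losing coalitions, and then to use monotonicity together with the nonnegativity of the weights to reduce that condition to one involving only the minimal winning and maximal losing coalitions.

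First I would unfold Definition~\ref{def_feasibility}. A normalized weight vector $w$ is feasible for $v$ precisely when there is a quota $q>0$ with $v(U)=1\iff w(U)\ge q$ for every coalition $U\subseteq N$, where $w(U)=\sum_{i\in U}w_i$. Equivalently, $q$ must satisfy $w(U)\ge q$ for every winning coalition $U$ and $w(U)<q$ for every losing coalition $U$. Such a $q$ exists if and only if $\min\{w(U):v(U)=1\}>\max\{w(U):v(U)=0\}$, that is, the lightest winning coalition must be strictly heavier than the heaviest losing coalition.

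Next I would show that these two extrema are attained on minimal winning and maximal losing coalitions, respectively. Since every $w_i\ge 0$, enlarging a coalition cannot decrease its weight; hence for any winning $U$ there is a minimal winning $S\subseteq U$ with $w(S)\le w(U)$, so the minimum over winning coalitions equals the minimum over minimal winning coalitions. Dually, for any losing $U$ there is a maximal losing $T\supseteq U$ with $w(U)\le w(T)$, so the maximum over losing coalitions equals the maximum over maximal losing coalitions. This reduces the gap condition to $w(S)>w(T)$ for all minimal winning $S$ and all maximal losing $T$, which is exactly the claimed system of inequalities.

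Finally I would assemble the two directions. For necessity, if $w$ is feasible with quota $q$, then any minimal winning $S$ satisfies $w(S)\ge q$ and any maximal losing $T$ satisfies $w(T)<q$, giving $w(S)>w(T)$ at once. For sufficiency, assuming all the strict inequalities hold, the two reductions yield $\min_S w(S)>\max_T w(T)$; choosing any $q$ with $\max_T w(T)<q\le\min_S w(S)$ and invoking the reductions once more shows $w(U)\ge q$ for every winning $U$ and $w(U)<q$ for every losing $U$, so $[q;w]=v$. The positivity $q>0$ is automatic, since the empty coalition is losing and therefore $\max_T w(T)\ge w(\emptyset)=0<q$. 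The only real content is the reduction step of the third paragraph; the rest is bookkeeping, and I expect no genuine obstacle beyond the care needed to verify that the chosen $q$ classifies \emph{every} coalition correctly, not merely the minimal and maximal ones.
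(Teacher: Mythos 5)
Your proof is correct, and since the paper states this lemma without any proof, your argument supplies exactly the standard verification the authors take for granted: reduce feasibility of $w$ to the gap condition $\min\{w(S)\mid v(S)=1\}>\max\{w(T)\mid v(T)=0\}$, then use nonnegativity of weights and monotonicity of $v$ to replace the extrema by minimal winning and maximal losing coalitions. Your observation that $q>0$ is automatic because the empty coalition is losing is a detail worth making explicit, and you have done so.
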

Similarly, for a representation to be valid, or compatible with a given weighted game, it 
must fulfill the linear inequality constraints imposed by the set of minimal winning 
coalitions and the maximal losing coalitions of the game.
\begin{lemma}\label{lemma_set_of_normalized_representations}
The set of all normalized representations $(q,w)\in\mathbb{R}^{n+1}_{\ge 0}$, $q\in(0,1]$, 
$\sum_{i=1}^n w_i=1$ representing a given weighted game $v$ is given by the intersection
$$
\sum_{i\in S} w_i\ge q,\quad\sum_{i\in T} w_i<q
$$
for all minimal winning coalitions $S$ and all maximal losing coalitions $T$.
\end{lemma}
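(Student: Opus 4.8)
The plan is to establish the claimed set equality by proving the two inclusions separately, matching the asserted system of inequalities against the defining condition of a representation in Definition~\ref{def_weighted_majority_game}. I would first check necessity. If $(q,w)$ with $q\in(0,1]$, $w\in\mathbb{R}^n_{\ge 0}$ and $\sum_{i=1}^n w_i=1$ is a representation of $v$, then by Definition~\ref{def_weighted_majority_game} every coalition $S$ with $v(S)=1$ satisfies $\sum_{i\in S}w_i\ge q$ and every coalition $T$ with $v(T)=0$ satisfies $\sum_{i\in T}w_i<q$. Since minimal winning coalitions are winning and maximal losing coalitions are losing, the displayed inequalities follow at once; no structural facts beyond the definitions are needed in this direction.

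The content lies in the converse. Assume $\sum_{i\in S}w_i\ge q$ for every minimal winning coalition $S$ and $\sum_{i\in T}w_i<q$ for every maximal losing coalition $T$; I must verify the biconditional $\sum_{i\in S}w_i\ge q\Leftrightarrow v(S)=1$ for \emph{every} coalition $S\subseteq N$. I would argue by cases on $v(S)$. If $v(S)=1$, then $S$ contains some minimal winning coalition $S'\subseteq S$, and since the weights are nonnegative the weight functional is monotone under inclusion, giving $\sum_{i\in S}w_i\ge\sum_{i\in S'}w_i\ge q$. If $v(S)=0$, then $S$ is contained in some maximal losing coalition $T\supseteq S$, and the same monotonicity yields $\sum_{i\in S}w_i\le\sum_{i\in T}w_i<q$. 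Because $v(S)\in\{0,1\}$, these two implications are together equivalent to the required biconditional, so $(q,w)$ is a representation.

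The only genuine ingredients are two standard reduction facts — that every winning coalition contains a minimal winning one and every losing coalition is contained in a maximal losing one — which hold because $N$ is finite and $v$ is monotone with $v(\emptyset)=0$ and $v(N)=1$, together with the monotonicity of the map $S\mapsto\sum_{i\in S}w_i$ supplied by the hypothesis $w_i\ge 0$. I do not anticipate a real obstacle: the whole point is that restricting the (exponentially many) constraints indexed by all winning and all losing coalitions down to the two finite families indexed by minimal winning and maximal losing coalitions loses no information, precisely because the weight sum is monotone. The argument runs parallel to the proof of Lemma~\ref{lemma_set_of_normalized_feasible_weights}.
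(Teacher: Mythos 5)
Your proof is correct: the necessity direction is immediate from Definition~\ref{def_weighted_majority_game}, and your sufficiency argument --- every winning coalition contains a minimal winning one, every losing coalition sits inside a maximal losing one, and $S\mapsto\sum_{i\in S}w_i$ is monotone because $w_i\ge 0$ --- is exactly the standard reduction the result rests on. The paper itself states Lemma~\ref{lemma_set_of_normalized_representations} without proof, treating it as a routine consequence of the definitions, so your write-up supplies precisely the argument the authors take for granted, with no gaps.
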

The two sets of linear inequalities define convex polytopes in Euclidean space.

The following lemma shows that we can replace the strict inequalities by the corresponding 
non-strict inequalities, because after the elimination of one weight, the resulting polytopes 
(defined below) are full dimensional. The dimensions are $n$ in the case of the polytope 
defined in Lemma~\ref{lemma_set_of_normalized_representations}, and $n-1$ in the case of the 
polytope defined in (Lemma~\ref{lemma_set_of_normalized_feasible_weights}). We have,
\begin{lemma}\label{lemma_full_dimension}
For each weighted game $v$ there exist positive real numbers $\tilde{q},\tilde{w}_1,
\dots,\tilde{w}_{n-1}$, and a parameter $\alpha>0$, such that 
$$
\left(\tilde{q}+\delta_0,\tilde{w}_1+\delta_1,\dots,\tilde{w}_{n-1}+\delta_{n-1},1-
\sum_{i=1}^{n-1}\left(\tilde{w}_i+\delta_i\right)\right)
$$
is a normalized representation of $v$ for all $\delta_i\in[-\alpha,\alpha]$, $0\le i\le n-1$. 
\end{lemma}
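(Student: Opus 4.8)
The plan is to certify full-dimensionality by exhibiting a single point at which \emph{every} defining inequality of the representation polytope holds strictly; a box of some radius $\alpha>0$ centred at such a point then lies inside the polytope automatically. After eliminating $w_n$ through the normalization $w_n=1-\sum_{i=1}^{n-1}w_i$, the free coordinates are $q,w_1,\dots,w_{n-1}$, and by Lemma~\ref{lemma_set_of_normalized_representations} the conditions to be met are $\sum_{i\in S}w_i\ge q$ for every minimal winning coalition $S$, $\sum_{i\in T}w_i<q$ for every maximal losing coalition $T$, together with $w_i\ge 0$ for $1\le i\le n$ and $0<q\le 1$. I aim to produce $(\tilde q,\tilde w_1,\dots,\tilde w_{n-1})$ making all of these strict at once.

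First I would secure a feasible weight vector with \emph{strictly positive} entries. Since $v$ is weighted, Lemma~\ref{lemma_representations}(1) supplies a normalized representation, so by Lemma~\ref{lemma_set_of_normalized_feasible_weights} the set of normalized feasible weight vectors is nonempty; pick one, $w^{\ast}$, satisfying $w^{\ast}(S)>w^{\ast}(T)$ for all relevant pairs $(S,T)$. These are finitely many strict, hence open, conditions, so mixing with the uniform vector $u=(1/n,\dots,1/n)$ and setting $\tilde w=(1-\lambda)w^{\ast}+\lambda u$ preserves all of them for every sufficiently small $\lambda>0$, while forcing $\tilde w_i\ge\lambda/n>0$ for each $i$ and keeping $\sum_i\tilde w_i=1$. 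Thus $\tilde w$ is a normalized feasible weight vector with every coordinate strictly positive, in particular $\tilde w_n=1-\sum_{i=1}^{n-1}\tilde w_i>0$.

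Next I would fix the quota. Writing $L=\max_T\sum_{i\in T}\tilde w_i$ over maximal losing $T$ and $U=\min_S\sum_{i\in S}\tilde w_i$ over minimal winning $S$ (extrema over finite sets), feasibility of $\tilde w$ gives $L<U$, so I set $\tilde q=\tfrac12(L+U)\in(L,U)$. Then $\sum_{i\in S}\tilde w_i\ge U>\tilde q$ and $\sum_{i\in T}\tilde w_i\le L<\tilde q$ hold strictly, while $\tilde q>L\ge 0$ yields $\tilde q>0$ and $\tilde q<U\le\tilde w(N)=1$ yields $\tilde q<1$. At the point $(\tilde q,\tilde w_1,\dots,\tilde w_{n-1})$ every one of the finitely many defining inequalities therefore holds with strict inequality, and by Lemma~\ref{lemma_set_of_normalized_representations} each such point is indeed a normalized representation of $v$.

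To finish, I would note that all the constraints—including $w_n\ge 0$ rewritten as $1-\sum_{i=1}^{n-1}w_i\ge 0$ and $q\le 1$—are affine inequalities in $(q,w_1,\dots,w_{n-1})$ and finite in number, so the minimal slack at the chosen point is a positive number $\varepsilon$; since each affine function varies by at most a fixed multiple of $\max_i|\delta_i|$ under the perturbation $\delta_0,\dots,\delta_{n-1}$, any $\alpha>0$ small enough relative to $\varepsilon$ keeps every inequality strict, which is exactly the asserted box. The only genuine obstacle is the strict positivity of the weights: a raw normalized representation may place some voter (for instance a dummy) on the boundary $w_i=0$, where no full box fits, and the mixing step with the uniform vector is precisely what pushes the point into the relative interior while preserving the game—this works only because the feasibility inequalities are strict and hence stable under small perturbations.
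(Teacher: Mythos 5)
Your proof is correct, and it certifies the same geometric fact as the paper -- the existence of a point with strict slack in every defining inequality -- but by a genuinely different route. The paper's own proof is an explicit integer-arithmetic construction: it starts from an integer representation, rescales so that every weight is at least $1$, exploits the unit gap between winning coalitions (weight $\ge q$) and losing coalitions (weight $\le q-1$) by moving the quota to $q-\tfrac{2}{5}$, verifies that perturbations $\tilde{\delta}_i\in\left[-\tfrac{1}{5n},\tfrac{1}{5n}\right]$ preserve the representation, and then normalizes by $s=\sum_i w_i$, remarking that $\alpha=\tfrac{1}{10ns}$ works while $\alpha=\tfrac{1}{5ns}$ does not. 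You instead argue softly from the characterizations in Lemma~\ref{lemma_set_of_normalized_feasible_weights} and Lemma~\ref{lemma_set_of_normalized_representations}: convex mixing with the uniform vector pushes an arbitrary feasible weight vector into the relative interior (this is your substitute for the paper's ``all weights $\ge 1$'' step, and it is what handles boundary coordinates such as zero-weight dummies), the midpoint quota $\tilde{q}=\tfrac{1}{2}(L+U)$ makes both families of quota inequalities strict, and the finite number of affine strict inequalities then yields a positive minimal slack and hence the box. What the paper's approach buys is a near-explicit $\alpha$ in terms of the integer representation; what yours buys is the elimination of the numerical bookkeeping the paper itself calls ``fiddly,'' at the price of a nonconstructive $\alpha$. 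Two points are worth keeping explicit: the mixing step changes $w(S)-w(T)$ by $\lambda\left(|S|-|T|\right)/n$, which can be negative, so the openness of the finitely many strict conditions is genuinely doing work there; and your reliance on the two characterization lemmas is non-circular, since they are stated independently of the present lemma, which is only invoked afterwards to justify replacing strict by non-strict inequalities in the polytopes $\V(v)$ and $\R(v)$.
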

\begin{proof}
Let $(q,w_1,\dots,w_n)$ be an integer representation of $v$. Consequently, the weight of each 
winning coalition is at least $q$, and the weight of each losing coalition is at most $q-1$. 
Since $\big((n+1)q,(n+1)w_1+1,\dots,(n+1)w_n\big)$ is also an integer representation of $v$, we 
additionally assume, without any loss of generality, that $w_i\ge 1$ for all $1\le i\le n$. One 
can easily check that
$$
\left(q-\frac{2}{5}+\tilde{\delta}_0,w_1+\tilde{\delta}_1,\dots,w_n+\tilde{\delta}_n\right)
$$
is a representation of $v$ for all $\tilde{\delta}_i\in\left[-\frac{1}{5n},\frac{1}{5n}\right]$, 
$0\le i\le n$. With $s=\sum_{i=1}^n w_i$, let $\tilde{q}=\frac{1}{s}\cdot\left(q-\frac{2}{5}\right)$ 
and $\tilde{w}_i=\frac{1}{s}\cdot w_i$ for all $1\le i\le n-1$. The choice of a suitable $\alpha$ is 
fiddly. For example, $\alpha=\frac{1}{5ns}$ is too large, whereas $\alpha=\frac{1}{10ns}$ works, but 
the existence is guaranteed by construction.
\end{proof}

To formally define the polytopes, let $\mathcal{W}^m$ be the set of minimal winning coalitions and 
$\mathcal{L}^m$ the set of maximal losing coalitions. The weight polytope is given by
$$
\V(v)=\left\{w\in\mathbb{R}^n_{\ge 0}\mid \sum_{i=1}^n w_i=1,\,w(S)\ge w(T)\quad
\forall S\in\mathcal{W}^m,\,T\in\mathcal{L}^m\right\}.
$$
The representation polytope is given by
$$
\R(v)=\left\{(q,w)\in\mathbb{R}^{n+1}_{\ge 0}\mid \sum_{i=1}^n w_i=1,\,w(S)\ge q\quad
\forall S\in\mathcal{W}^m,\,w(T)\le q\quad\forall T\in\mathcal{L}^m\right\}.
$$

Let us illustrate the computation of average normalized weights on the example discussed in 
the introduction. The weight polytope $\mathsf{V}=\mathsf{V}(v)$ of the game $v=[51;47,46,5,2]$ 
is defined by the following system of inequalities
\begin{eqnarray*}
w_1+w_2 \ge w_1+w_4,\quad w_1+w_2 \ge w_2+w_4,\quad w_1+w_2 \ge w_3+w_4,\\
w_1+w_3 \ge w_1+w_4,\quad w_1+w_3 \ge w_2+w_4,\quad w_1+w_3 \ge w_3+w_4,\\
w_2+w_3 \ge w_1+w_4,\quad w_2+w_3 \ge w_2+w_4,\quad w_2+w_3 \ge w_3+w_4,
\end{eqnarray*}
in addition to $w_1+w_2+w_3+w_4=1$ and $w_i\ge 0$. Eliminating redundant inequalities yields
$$
w_1+w_2\ge w_3+w_4,\,\,\,w_1+w_3\ge w_2+w_4,\,\,\,w_2+w_3\ge w_1+w_4,\,\,\,w_4\ge 0,\,\,\,w_1+w_2+w_3+w_4=1.
$$
The variables $w_1,w_2,w_3$ are symmetric. By assuming a specific ordering of these variables, 
we can decompose the integration domain $\mathsf{V}$ into six parts $\mathsf{P}$, such that the 
resulting six integrals are equal. Moreover, it suffices to compute the average normalized weight 
for voter~$4$, because
$$
\int\limits_{\mathsf{V}}w_1\od \mathsf{V}=\int\limits_{\mathsf{V}}w_2\od \mathsf{V}=
\int\limits_{\mathsf{V}}w_3\od \mathsf{V}.
$$

Let the ordering be $w_1\ge w_2\ge w_3$. Substituting $w_1=1-w_2-w_3-w_4$ yields 
$$
\mathsf{P}=\left\{(w_2,w_3,w_4)\in\mathbb{R}^3\mid w_2\ge w_3\ge w_4\ge 0,\, 
2w_2\ge 1-2w_3,\,2w_2\le 1-w_3-w_4\right\}.
$$
To obtain the integration domain $\mathsf{P}$, note that $\max\{w_4\mid w\in \mathsf{V}\}=\frac{1}{4}$. 
Since $w_1\ge w_2\ge w_3\ge w_4$ and $w_1+w_2+w_3+w_4=1$, the maximum of $w_3$ given $w_4$ is $\frac{1-w_4}{3}$. 
Therefore,
\begin{eqnarray*}
\int\limits_{\mathsf{V}} f(w_4)\od \mathsf{V} &=&
6\int\limits_{\mathsf{P}} f(w_4)\od \mathsf{P} = 6\int\limits_{0}^{\frac{1}{4}} 
\int\limits_{w_4}^{(1-w_4)/3} \int\limits_{\max(w_3,1/2-w_3)}^{(1-w_3-w_4)/2} f(w_4)\od w_2 \od w_3 \od w_4\\
&=& 6\int\limits_{0}^{\frac{1}{4}} \int\limits_{w_4}^{\frac{1}{4}} 
\int\limits_{1/2-w_3}^{(1-w_3-w_4)/2} f(w_4)\od w_2 \od w_3 \od w_4 +\\
&+& 6\int\limits_{0}^{\frac{1}{4}} \int\limits_{\frac{1}{4}}^{(1-w_4)/3} 
\int\limits_{w_3}^{(1-w_3-w_4)/2} f(w_4)\od w_2 \od w_3 \od w_4.
\end{eqnarray*}
Setting $f(w_4)=1$ yields $\frac{1}{96}$ as the volume of $\mathsf{V}$, whereas setting 
$f(w_4)=w_4$ yields $\frac{1}{1536}$. The average normalized weight of voter~$4$ thus equals 
$\frac{1}{16}$. The remaining average weights sum to $\frac{15}{16}$.

Replacing $f(w_4)$ by $w_1,w_2,w_3$ yields $\frac{19}{4608}$, $\frac{5}{2304}$ and $\frac{1}{288}$. 
By the symmetry of $w_1,w_2,w_3$,
$$
\int\limits_{\mathsf{V}} w_1\od \mathsf{V} = \int\limits_{\mathsf{V}} w_2\od \mathsf{V} 
= \int\limits_{\mathsf{V}} w_3\od \mathsf{V} = 6\int\limits_{\mathsf{P}} 
\frac{\overset{=1-w_4}{\overbrace{w_1+w_2+w_3}}}{3}\od \mathsf{P} 
=\frac{1}{3}\cdot \left(\frac{19}{4608}+\frac{1}{288}+\frac{5}{2304}\right) =\frac{5}{1536}.
$$
This yields the following vector of average normalized feasible weights 
$\left(\frac{5}{16},\frac{5}{16},\frac{5}{16},\frac{1}{16}\right)$.

We now consider the computation of the average representation based on the polytope $\mathsf{R}$. 
Since $w_1\ge w_2\ge w_3$, a valid quota $q$ must fulfill $w_1+w_4=1-w_2+w_3\le q\le w_2+w_3$, so that
$$
\mathsf{R}=\mathsf{R}(v)=\left\{(q,w)\mid w\in \mathsf{V},\, 1-w_2+w_3\le q\le w_2+w_3\right\}.
$$
Following the above reasoning, we obtain 
\begin{eqnarray*}
\int\limits_{\mathsf{R}} f(q,w_4)\od \mathsf{R} &=&
6\int\limits_{0}^{\frac{1}{4}} \int\limits_{w_4}^{(1-w_4)/3} 
\int\limits_{\max(w_3,1/2-w_3)}^{(1-w_3-w_4)/2}\int\limits_{1-w_2-w_3}^{w_2+w_3} f(q,w_4) 
\od q\od w_2 \od w_3 \od w_4\\
&=& 6\int\limits_{0}^{\frac{1}{4}} \int\limits_{w_4}^{\frac{1}{4}} 
\int\limits_{1/2-w_3}^{(1-w_3-w_4)/2}\int\limits_{1-w_2-w_3}^{w_2+w_3} f(q,w_4)\od q\od w_2 \od w_3 \od w_4 +\\
&+& 6\int\limits_{0}^{\frac{1}{4}} \int\limits_{\frac{1}{4}}^{(1-w_4)/3} \int\limits_{w_3}^{(1-w_3-w_4)/2}
\int\limits_{1-w_2-w_3}^{w_2+w_3} f(q,w_4)\od q\od w_2 \od w_3 \od w_4.
\end{eqnarray*}
Setting $f(q,w_4)=1$ yields $\frac{1}{1152}$ as the volume of the polytope $\mathsf{R}$. For $f(q,w_4)=w_4$ 
we obtain $\frac{1}{23040}$, so that the average representation of voter~$4$ is given by $\frac{1}{20}$. The 
average representation of the game reads 
$\left(\frac{1}{2};\frac{19}{60},\frac{19}{60},\frac{19}{60},\frac{1}{20}\right)$. The average quota can 
be obtained by setting $f(q,w_4)=q$.

To round off the example, we mention that in small games integrals such as those can be evaluated using 
the software \texttt{LattE} by \citeasnoun{latte}.\footnote{Our programs use \texttt{LattE}. They can be 
downloaded from \texttt{http://serguei.kaniovski.wifo.ac.at/}.} Computing average weights in games with many 
players may require numerical integration based on Monte Carlo methods, such as hit-and-run.

The average normalized weights and the average representation come close to fulfilling the criteria for 
coherent measures of voting power provided in Definition~\ref{def_coherency}. By construction, they are 
symmetric, positive, efficient and strongly monotonic according to Definition~\ref{def_strong_monotonicity}. 
Strong monotonicity in the sense of Isbell's desirability relation in Definition~\ref{def_des_relation} 
follows, because $i\succ j$ implies $w_i>w_j$ in each representation of a given weighted game. However, 
they do not satisfy the dummy property, as this property was not accounted for in the underlying set of 
inequalities. Indeed, in the above example the fourth voter is a dummy, yet her weight in the vector of 
average weights power does not vanish.

To ensure coherency we restrict the polytopes so that all dummies receive the value of zero. The 
dummy-revealing weight polytope is given by
$$
\V^d(v)=\V(v)\cap\left\{w\in\mathbb{R}^n_{\ge 0}\mid w_i=0\quad\forall i\in D\right\}.
$$
The dummy-revealing representation polytope is given by
$$
\R^d(v)=\R(v)\cap\left\{w\in\mathbb{R}^n_{\ge 0}\mid w_i=0\quad\forall i\in D\right\}.
$$
From Lemma~\ref{lemma_full_dimension} we conclude that the $(t-1)$-dimensional volume of 
$\V^d(v)$ and the $t$-dimensional volume of $\R^d(v)$ is non-zero for each weighted game $v$, 
where $1\le t\le n$ denotes the number of non-dummy voters of $v$. We can now use the restricted 
polytopes $\V^d(v)$ and $\R^d(v)$ to define the power indices.
\begin{definition}\label{def_indices}
The average weight index of voter $i$ in a weighted game $v$ is given by
$$
AWI_i(v)=\frac{\int_{\V^d}w_i\od w}{\int_{\V^d}\od w}.
$$
Similarly, the average representation index of voter $i$ in a weighted game $v$ is given by
$$
ARI_i(v)=\frac{\int_{\R^d}w_i\od (q,w)}{\int_{\R^d}\od (q,w)}.
$$
\end{definition}
In the above definition, all integrals are understood as multiple integrals.

It is important to note that dummy-related restrictions are irrelevant for computation. In fact, the 
more dummy voters a game has, the simpler the power computations are. We can safely remove the dummies 
prior to computing the indices. The validity of this procedure follows from the following result, which 
also holds for the Banzhaf index. Given a weighted game $v:2^N\rightarrow\{0,1\}$ with the set of dummy 
voters $D\subset N$, we define the dummy-reduced game $v':2^{N\backslash D}\rightarrow \{0,1\}$ via 
$v'(T)=v(T)$ for all $T\subseteq N\backslash D$. All dummies receive the value of zero in the outcome vector.
\begin{lemma}\label{lemma_dummy}
Given a sequence of power indices $g^n:\mathcal{C}_n\rightarrow\mathbb{R}^n$ for all $n\in\mathbb{N}$, let 
$\tilde{g}^n:\mathcal{C}_n\rightarrow\mathbb{R}^n$ be defined via $\tilde{g}^n_i(v)=g^m_i(v')$ for all 
non-dummies $i$ and by $\tilde{g}^n_j(v)=0$ for all dummies $j$, where $m$ is the number of non-dummies in 
$v$ and $v'$ arises from $v$ by removing the dummies. The power index $\tilde{g}^n$ now satisfies the dummy 
property. 
\end{lemma}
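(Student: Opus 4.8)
The plan is to observe that the dummy property for $\tilde{g}^n$ is almost immediate from its definition, so the real substance of the argument lies in checking that the construction is well-posed: that the reduced game $v'$ is a genuine member of the class $\mathcal{C}_m$, so that $g^m(v')$ is defined and the assignment $\tilde{g}^n_i(v)=g^m_i(v')$ makes sense for each non-dummy $i$. First I would record the elementary identity that adding or deleting dummies never changes the value of a coalition: if $d$ is a dummy of $v$ then $v(T)=v(T\cup\{d\})$ for every $T$, and iterating over the whole dummy set $D$ gives $v(S)=v(S\setminus D)$ for all $S\subseteq N$. This one fact is what drives both verifications below.

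Using it, I would check that $v'$, defined on the non-dummy set $N\setminus D$ by $v'(T)=v(T)$, is a simple game on $m=n-|D|$ voters: $v'(\emptyset)=v(\emptyset)=0$; $v'(N\setminus D)=v(N\setminus D)=v(N)=1$; and monotonicity is inherited verbatim from $v$. In particular $m\ge 1$, since $v(N)=1\ne v(\emptyset)$ forces at least one non-dummy, so the index $g^m$ is indeed applicable. When $\mathcal{C}_n=\mathcal{W}_n$ I would additionally confirm that $v'$ is weighted: by Lemma~\ref{lemma_representations}(2), $v$ admits a normalized representation $[q;w_1,\dots,w_n]$ with $w_i=0$ for every dummy $i$, and restricting this representation to the coordinates in $N\setminus D$ gives a representation of $v'$, because the deleted weights are zero and hence every coalition sum is unchanged. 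Fixing the unique order-preserving bijection $N\setminus D\to\{1,\dots,m\}$ then makes $g^m(v')$ unambiguous without having to invoke symmetry of $g$.

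With well-definedness in hand, the dummy property follows directly: by the very definition of $\tilde{g}^n$, every dummy $j$ of $v$ is assigned $\tilde{g}^n_j(v)=0$, which is exactly the requirement of Definition~\ref{def_coherency}(4). The only genuinely non-formal step in the whole argument is the well-definedness of $v'$ in the weighted case, so that is where I expect the main obstacle to sit; everything else is bookkeeping resting on the invariance-under-dummies identity. I would also remark that this same identity underlies the accompanying computational claim that $AWI$ and $ARI$ may be evaluated on the dummy-reduced game, although that equality of index values is a separate statement and is not needed for the present dummy-property conclusion.
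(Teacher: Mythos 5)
Your proposal is correct. Note, however, that the paper offers no proof of this lemma at all: it is stated as an unproved observation, precisely because its conclusion, the dummy property, holds by fiat — $\tilde{g}^n_j(v)=0$ for every dummy $j$ is literally part of the definition of $\tilde{g}^n$, which is the one-line argument you give at the end. What you add beyond the paper is the well-definedness check: that $v'$ is a simple game on $m\ge 1$ voters, and, when $\mathcal{C}_n=\mathcal{W}_n$, that $v'$ is again weighted so that $g^m(v')$ is even defined. That check is sound (in fact, restricting \emph{any} representation of $v$ to the coordinates in $N\setminus D$ already represents $v'$, since coalitions of $v'$ contain no dummies; the appeal to Lemma~\ref{lemma_representations}(2) is convenient but not necessary), and it is a genuine gap-filling service to the reader that the paper leaves implicit. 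Your closing remark is also well judged: the claim that AWI and ARI can be \emph{computed} on the dummy-reduced game — i.e., that $\tilde{g}^n$ coincides with the original index on non-dummies — is a separate statement which neither the lemma nor your proof needs to establish.
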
 
We call $\tilde{g}^n$ the dummy-revealing version of a given sequence of power indices $g^n$. The above 
lemma shows that the presence of dummies reduces the dimension of the polytopes, thus simplifying computations.

Tables~\ref{tabA2a} and \ref{tabA2a_cont} of Appendix~\ref{app:tables} list power distributions according 
to the AWI and ARI for all weighted games with up to five voters. Power distributions in games with fewer 
than five voters can be obtained from games in which the additional voters are assumed to be dummies. For 
example, the power distribution in the game [3;2,1,1], in which none of the three voters is a dummy, is 
given by the first three coordinates of the power vector for the game [3;2,1,1,0,0], in which the additional 
two voters are dummies. This holds for each of the four power indices.

Ensuring that AWI and ARI preserve the types of voters implied in the equivalence relations of 
Definition~\ref{def_equivalence_classes} requires imposing the following type-revealing restrictions 
on the polytopes.
\begin{eqnarray*}
\V^t(v) &=& \V^d(v)\cap\left\{w\in\mathbb{R}^n_{\ge 0}\mid w_i=w_j\quad\forall i,j\in N\mbox{ s.t. }i\simeq j\right\},\\
\R^t(v) &=& \R^d(v)\cap\left\{w\in\mathbb{R}^n_{\ge 0}\mid w_i=w_j\quad\forall i,j\in N\mbox{ s.t. }i\simeq j\right\}.
\end{eqnarray*}
Lemma~\ref{lemma_full_dimension} implies that $(t-1)$-dimensional volume of $\V^t(v)$ and the $t$-dimensional 
volume of $\R^t(v)$ is non-zero for each weighted game $v$, where $1\le t\le n$ denotes the number of 
equivalence classes of voters of $v$, excluding the dummy voters. Note that the dimension of the polytopes 
for the type-revealing indices are typically smaller than for the AWI and ARI. The case of $t=n$ can be 
handled separately, as in this case all voters are by definition equally powerful.

Tables~\ref{tabA2b} and \ref{tabA2b_cont} of Appendix~\ref{app:tables} list the type-revealing versions 
of AWI and ARI, called AWTI and ARTI, for all weighted games with up to four voters. A formal definition 
of the type-revealing indices is completely analogous to Definition~\ref{def_indices}. The computation 
of AWTI and ARTI follows the same procedures described the example above, except that it uses the 
restricted versions of the polytopes instead of their unrestricted counterparts. A complete example 
of the above calculations is provided in Appendix~\ref{app:example}.

We conclude the presentation of the power indices with a remark on duality (Definition~\ref{def_duality}).
\begin{lemma}\label{lemma_duality}
The average weight index (AWI) and the average representation index (ARI) coincide for the pairs of a 
weighted game $v$ and its dual $v^d$.
\end{lemma}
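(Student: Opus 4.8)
The plan is to exploit the fact that a weighted game $v$ and its dual $v^d$ share the same feasible weight vectors, and that their representation polytopes are related by a single affine reflection in the quota coordinate. I would first record two preliminary observations. The dummies of $v$ and $v^d$ coincide: voter $i$ satisfies $v(S\cup\{i\})=v(S)$ for all $S\subseteq N\backslash\{i\}$ if and only if the same holds for $v^d$, as one sees by writing $v^d(S)=1-v(N\backslash S)$ and substituting $S'=N\backslash(S\cup\{i\})$. Hence the set $D$ used in the definitions of $\V^d$ and $\R^d$ is identical for both games. Second, I would use the standard coalition duality: $S'\in\mathcal{W}^m(v^d)$ if and only if $N\backslash S'\in\mathcal{L}^m(v)$, and $T'\in\mathcal{L}^m(v^d)$ if and only if $N\backslash T'\in\mathcal{W}^m(v)$, which follows immediately from $v^d(S)=1-v(N\backslash S)$ and the definitions of minimal winning and maximal losing coalitions.

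For the AWI, I would show that the defining inequalities of $\V(v^d)$ are exactly those of $\V(v)$. A constraint of $\V(v^d)$ reads $w(S')\ge w(T')$ with $S'\in\mathcal{W}^m(v^d)$, $T'\in\mathcal{L}^m(v^d)$. Writing $T=N\backslash S'\in\mathcal{L}^m(v)$ and $S=N\backslash T'\in\mathcal{W}^m(v)$ and using $\sum_{i=1}^n w_i=1$, this becomes $1-w(T)\ge 1-w(S)$, i.e. $w(S)\ge w(T)$, a defining constraint of $\V(v)$; the map $(S',T')\mapsto(S,T)$ is a bijection between the constraint pairs. Since the normalization and nonnegativity constraints are identical, $\V(v)=\V(v^d)$, and intersecting with the common dummy restriction gives $\V^d(v)=\V^d(v^d)$. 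The numerators $\int_{\V^d}w_i\,\od w$ and the common denominator of the AWI quotient then agree coordinatewise, so $AWI(v)=AWI(v^d)$.

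For the ARI, the quota prevents the polytopes from coinciding, but they are related by the map $\Phi\colon(q,w)\mapsto(1-q,w)$. A constraint $w(S')\ge q'$ of $\R(v^d)$, with $T=N\backslash S'\in\mathcal{L}^m(v)$, becomes $w(T)\le 1-q'$; likewise $w(T')\le q'$, with $S=N\backslash T'\in\mathcal{W}^m(v)$, becomes $w(S)\ge 1-q'$. Setting $q=1-q'$ turns these into the constraints $w(S)\ge q$, $w(T)\le q$ of $\R(v)$, so $\Phi$ maps $\R(v^d)$ bijectively onto $\R(v)$ and, since $D$ is common, $\R^d(v^d)$ onto $\R^d(v)$. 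Because $\Phi$ is an affine reflection in the single coordinate $q$ and fixes each $w_i$, it preserves the relevant $t$-dimensional volume and leaves each integrand $w_i$ unchanged, whence the ARI quotient for $v^d$ equals that for $v$, giving $ARI(v)=ARI(v^d)$.

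The two preliminary observations and the sign bookkeeping are routine. The one point requiring care is the ARI step: one must verify that $\Phi$ is genuinely volume-preserving on the affine slice $\sum_{i=1}^n w_i=1$ (equivalently, after eliminating one weight as in Lemma~\ref{lemma_full_dimension}, that $q\mapsto 1-q$ contributes a unit Jacobian) and that $\Phi$ respects the dummy restriction. I expect no further obstacle, since the full-dimensionality guaranteed by Lemma~\ref{lemma_full_dimension} ensures all integrals are well defined and positive.
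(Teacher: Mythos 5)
Your proof is correct and follows essentially the same route as the paper: both rest on the observation that the dual game keeps the same weights while the quota reflects as $q\mapsto 1-q$, a volume-preserving map between the (closed) polytopes, so all integrals defining AWI and ARI are unchanged. Your write-up is somewhat more explicit than the paper's --- you verify the coalition duality $\mathcal{W}^m(v^d)\leftrightarrow\mathcal{L}^m(v)$, check that the dummy sets coincide, and work with the closed polytopes so the strict-versus-non-strict inequality issue (which the paper dispatches via Lemma~\ref{lemma_full_dimension} and an $\varepsilon$-perturbation) never arises --- but these are refinements of the same argument, not a different one.
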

\begin{proof}
According to Lemma~\ref{lemma_dual_weights}, the integer representations of $v$ and $v^d$ are in bijection. 
Let $(q,w_1,\dots,w_n)$ be a normalized representation of $v$, then $(1-q+\varepsilon,w_1,\dots,w_n)$ is a 
normalized representation of $v^d$ for a sufficiently small $\varepsilon>0$, as $q\in(0,1]$. If we require 
that the weight of each winning coalition in the dual game strictly exceed the quota, then we can choose 
$1-q$ as a quota for the dual game, while also retaining the weights. In view of 
Lemma~\ref{lemma_full_dimension}, this difference between a strict and non-strict inequality can be 
neglected when computing the indices, which proves the lemma.
\end{proof}

%%%%%%%%%%%%%%%%%%%%%%%%%%%%%%%%%%%%%%%%%%%%%%%%%%
\section{The properties of representation-compatible power indices}\label{sec:properties}
The common criteria for choosing an index include the existence of a game-theoretic axiomatization, 
consistency with certain stochastic models of voting or immunity to certain voting paradoxes. 
Table~\ref{tab2} compares the four power indices (AWI, ARI, AWTI, ARTI) to several existing power indices. 
Some of these power indices are well-known, whereas others have only recently been introduced. All indices 
introduced in this paper are coherent measures of power; they satisfy \textbf{Null}, \textbf{Eff}, 
\textbf{Invar} and \textbf{Str.Mon} (Definition~\ref{def_coherency}). Most researchers agree that a power 
index should at least be coherent. Yet two well-known power indices by \citeasnoun{deegan_pack1} and 
\citeasnoun{holler1} violate monotonicity, and are therefore not coherent.

\begin{scriptsize}
\renewcommand*{\arraystretch}{1.0}
\begin{longtable}[c]{lccccccccc}
\caption[]{\begin{minipage}[t]{10cm} Basic properties and immunities to voting paradoxes. \end{minipage}}\label{tab2}\endhead\hline\hline
Index & \textbf{Null} & \textbf{Eff} & \textbf{Invar} & \textbf{Str.Mon} & \textbf{Prop} & \textbf{Type-Rev} & \textbf{Bloc} & \textbf{Don} & \textbf{Bic.Meet} \\\hline
\citeasnoun{shapl_shub1} & $\checkmark$ & $\checkmark$ & $\checkmark$ & $\checkmark$ & & & $\checkmark$ & $\checkmark$ & \\
\citeasnoun{banzh1} & $\checkmark$ & $\checkmark$ & $\checkmark$ & $\checkmark$ & & & & & $\checkmark$ \\
\citeasnoun{johnst1} & $\checkmark$ & $\checkmark$ & $\checkmark$ & $\checkmark$ & & & & & \\
\citeasnoun{deegan_pack1} & $\checkmark$ & $\checkmark$ & $\checkmark$ & & & & & & \\
\citeasnoun{holler1} & $\checkmark$ & $\checkmark$ & $\checkmark$ & & & & & & \\
\citeasnoun{freixas_kan1} & $\checkmark$ & $\checkmark$ & $\checkmark$ & $\checkmark$ & $\checkmark$ & & & & $\checkmark$ \\
AWI & $\checkmark$ & $\checkmark$ & $\checkmark$ & $\checkmark$ & $\checkmark$ & & & & \\
ARI & $\checkmark$ & $\checkmark$ & $\checkmark$ & $\checkmark$ & $\checkmark$ & & & & \\
AWTI & $\checkmark$ & $\checkmark$ & $\checkmark$ & $\checkmark$ & $\checkmark$ & $\checkmark$ & & & \\
ARTI & $\checkmark$ & $\checkmark$ & $\checkmark$ & $\checkmark$ & $\checkmark$ & $\checkmark$ & & & \\\hline\hline
\end{longtable}
\end{scriptsize}

The defining property of the indices studied in this paper is representation-compatibility, which ensures 
proportionality (\textbf{Prop}) between power and weight. The MSR Index introduced 
in \citeasnoun{freixas_kan1} is the only existing power index that has this property. The new indices 
(AWTI and ARWI) are type-revealing (\textbf{Typ.Rev}), a property unique to them. Proportionality between 
power and weight makes representation-compatible indices convenient measures of power.

\subsection{Distributing parliamentary seats: an example}\label{subsec:nationalrat}
To illustrate this convenience, suppose we wish to fill the Austrian parliament (Nationalrat) following the 
general election of 2013. Six parties have attained the electoral threshold of 4 percent required to secure 
a seat in parliament. Their popular votes are listed in the first column of Table~\ref{tab3}. The Austrian 
parliament uses the D'Hondt method to allocate 183 seats among the political parties that passed the 
threshold. The actual seat distribution is given in the second column.

Despite the fact that the D'Hondt method is not based on power computations, the resulting distribution 
of voting power in the parliament resembles the distribution of power implied in the popular vote. This 
occurs because the D'Hondt method tries to achieve proportionality, thus preserving the game representation 
implied in the popular vote for any given voting rule. In our example, the resemblance is complete. For 
example, under plurality voting rule we have the following weighted voting games based on popular votes 
and parliamentary seats, respectively: $[2215538; 1258605, 1125876, 962313, 582657, 268679, 232946]$ and 
$[92; 52, 47, 40, 24, 11, 9]$. The third column of Table~\ref{tab3} shows that these games have identical 
power distributions according to the Shapley-Shubik index (SSI).

Suppose that, instead of using the D'Hondt method, we allocated the parliamentary seats according to the 
distribution of power implied in the popular vote under plurality voting rule. For example, we could 
distribute the seats according to power distributions obtained from the SSI or AWI, with the implied seat 
distributions provided in Table~\ref{tab3}. The distribution of seats according to the SSI index implies a 
different power distribution than the one given in the third column, as power vectors of the games 
$[92; 52, 47, 40, 24, 11, 9]$ and $[92; 67, 49, 49, 6, 6, 6]$ differ. On the contrary, 
$[92; 52, 47, 40, 24, 11, 9]$ and $[92; 63, 44, 44, 11, 11, 11]$ have identical power vectors according 
to the AWI. A seat distribution according to the AWI allows us to easily discern the power distribution 
from the weight distribution, because the AWI power vector is a representation of the game.

\renewcommand*{\arraystretch}{1.0}
\begin{longtable}[c]{lccccccc}
\caption[]{\begin{minipage}[t]{13.5cm} Austrian Nationalrat election, 2013. \end{minipage}}\label{tab3}\endhead\hline\hline
& Popular Votes & Seats & SSI & SSI Seats & AWI & AWI Seats \\\hline
SP{\"O} & 1,258,605 & 52 & 0.367 & 67 & 0.342 & 63 \\
{\"O}VP & 1,125,876 & 47 & 0.267 & 49 & 0.242 & 44 \\
FP{\"O} & 962,313 & 40 & 0.267 & 49 & 0.242 & 44 \\
Green & 582,657 & 24 & 0.033 & 6 & 0.058 & 11 \\
Team Stronach & 268,679 & 11 & 0.033 & 6 & 0.058 & 11 \\
NEOS & 232,946 & 9 & 0.033 & 6 & 0.058 & 11 \\\hline
Quota & 2,215,538 & 92 & & 92 & & 92 \\\hline\hline
\multicolumn{8}{l}{\begin{minipage}[l]{13.5cm} \vspace*{0.25cm} \footnotesize The total number of 
parliamentary seats according to the AWI is 184 not 183. This rounding error can be rectified by
 subtracting one seat from the largest party, as this would
leave the power distribution unchanged according to AWI.\end{minipage}}
\end{longtable} 

The distribution of seats based on the AWI uses the representation provided by the popular votes as a 
template. A different problem is that of designing a weighted voting game with an arbitrary given power 
distribution -- a problem of practical importance for institutional design. Although this inverse problem 
may not have an exact solution, an approximate solution for a representation-compatible power index can easily 
be found using a grid search for a quota that minimizes an objective function, say the sum of squared deviations 
between weights and powers. The desired power distribution becomes the weight distribution in the solution, 
appropriately rescaled should integer-valued voting weights be needed. This stands in contrast to the classical 
power indices, whose inverse problems are significantly more difficult.\footnote{See, for example, the 
fixed-point iteration methods for obtaining the inverse solution for the Banzhaf index in 
\citeasnoun{aziz_pater_leech1}.}

\subsection{A comparison with the classical power indices}\label{subsec:comparison}
The above example shows that the SSI is not representation-compatible in games with more than three voters, 
and neither is the BZI. If a power vector is not representation-compatible, then it must lie outside the 
polytope containing the feasible weights. To get a broad picture on how representation-compatible power 
indices differ from the classical indices by Banzhaf (BZI) and Shapley-Shubik (SSI), for each game we compute 
the Euclidean distance between the six measures, and consider the distribution of the distances for all games 
of a given size.

Figure~1 shows the boxplots of the distances for all games with sizes up to a given $n$. Similarly to Tables in 
Appendix~\ref{app:tables}, the games differ in their partitions in the equivalence sets, and are defined in 
terms of the minimum sum representations. The bar in the middle shows the median distance. The top whisker 
ranges from the 99\% quantile to the 75\% quantile. The bottom whisker ranges from the 25\% quantile to the 
1\% quantile. The box thus covers the range of 25-75\%.

The differences between the classical and representation-compatible indices become apparent as $n$ increases. 
The bottom panels suggest that representation-compatible indices lie closer to each other than the BZI and SSI, 
the former appears to lie closer to representation-compatible indices than the latter. The median distance 
between the AWI and the SSI is slightly larger than the median distance between the BZI and the AWI. This 
may suggest that the BZI is more likely to be representation-compatible than the SSI. But a power index 
is representation-compatible if it lies in the interior of the dummy-revealing polytope $\V^d(v)$, whose 
center of mass is the AWI power vector. Being closer to the center of mass does not imply being closer to 
the boundary of the polytope.

Among the four representation-compatible indices, the AWI and ARI appear to lie closer to each other then their 
type-revealing versions. The distances between the representation-compatible indices decreases with an 
increasing $n$, which is not surprising given that polytopes containing representation-compatible power 
distributions are likely to shrink as $n$ increases.

\begin{figure}[ht]
\centering
\subfigure[$n\leq 4$]{
\includegraphics[scale=0.4]{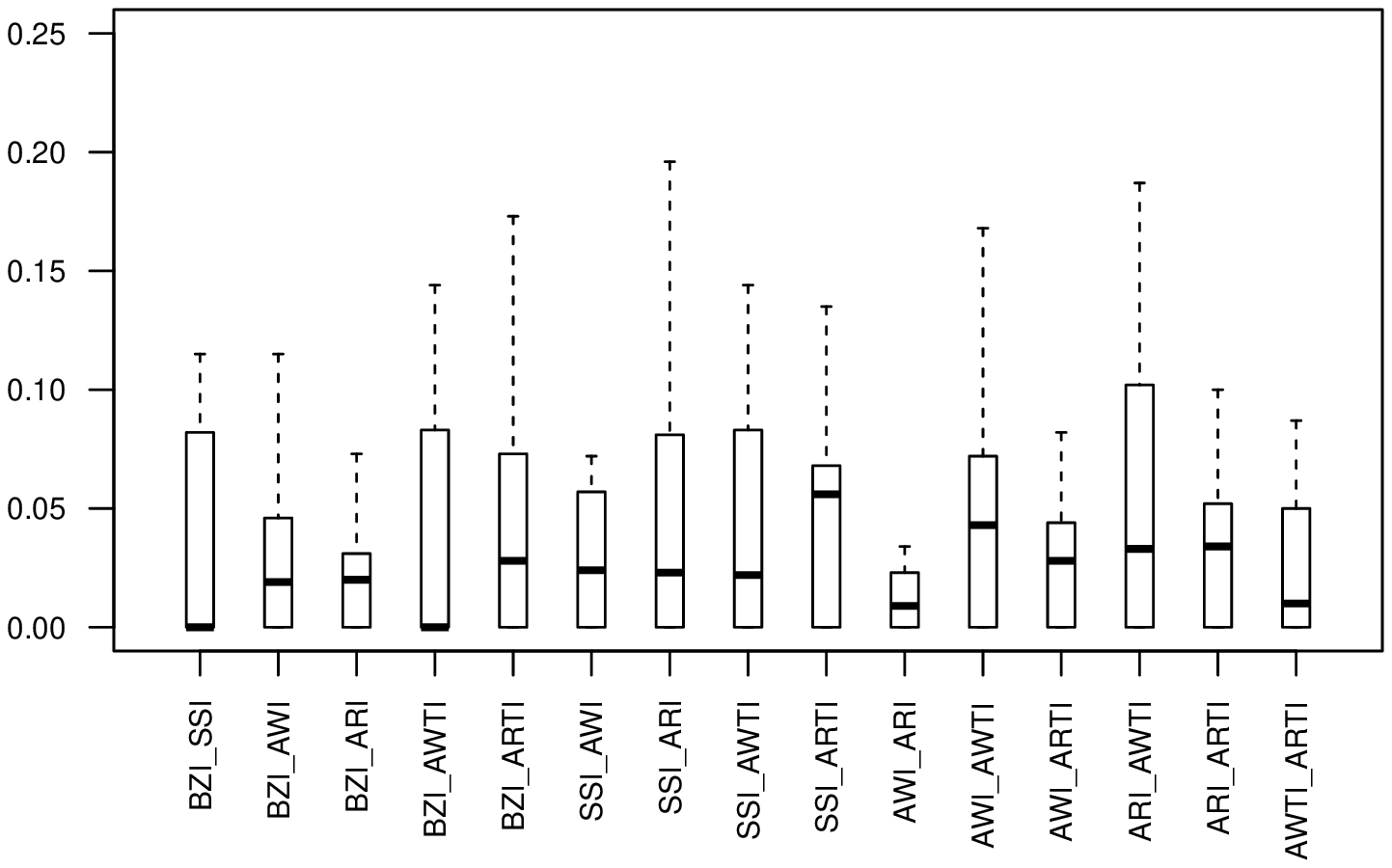}
}
\subfigure[$n\leq 5$]{
\includegraphics[scale=0.4]{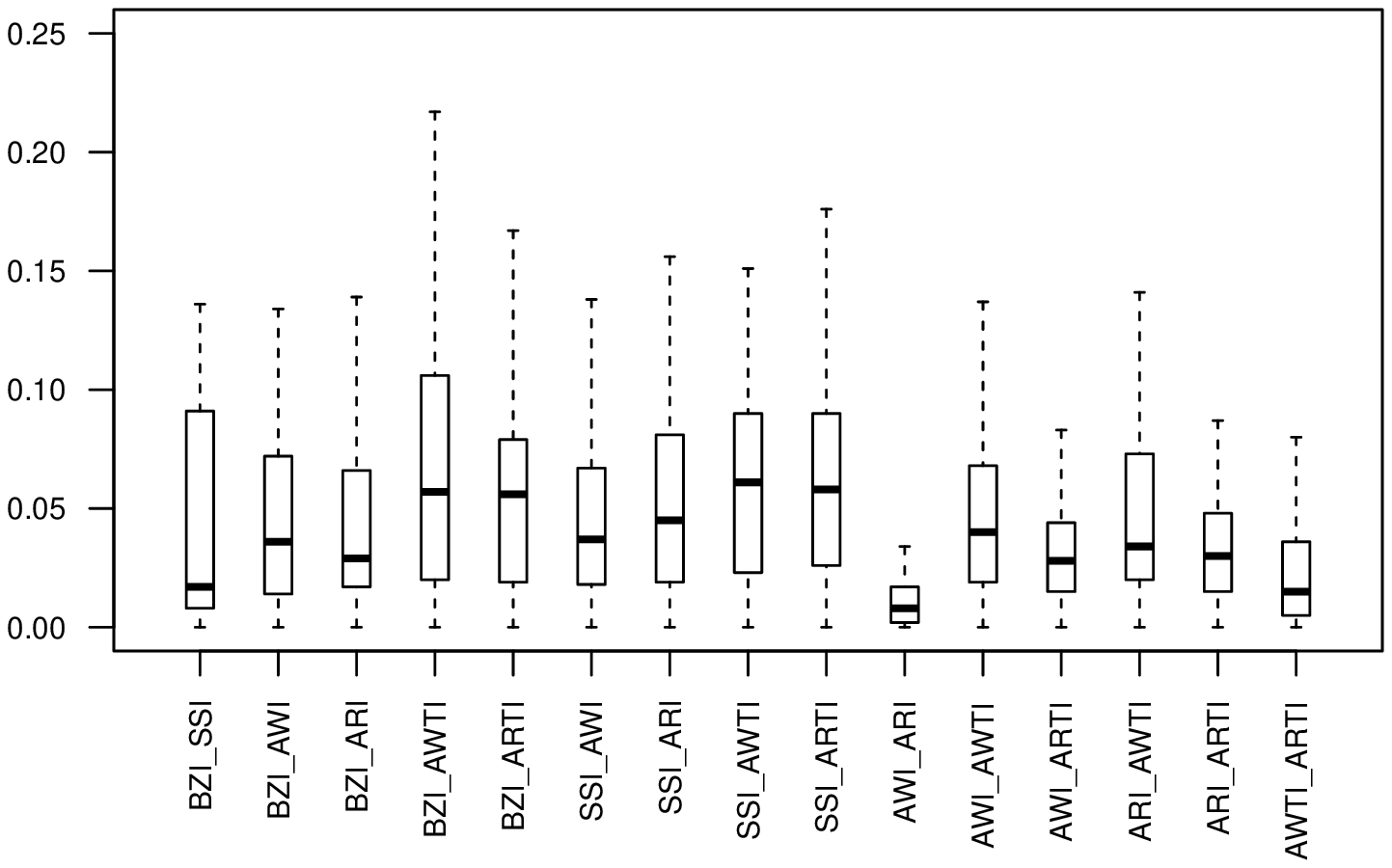}
}
\subfigure[$n\leq 6$]{
\includegraphics[scale=0.4]{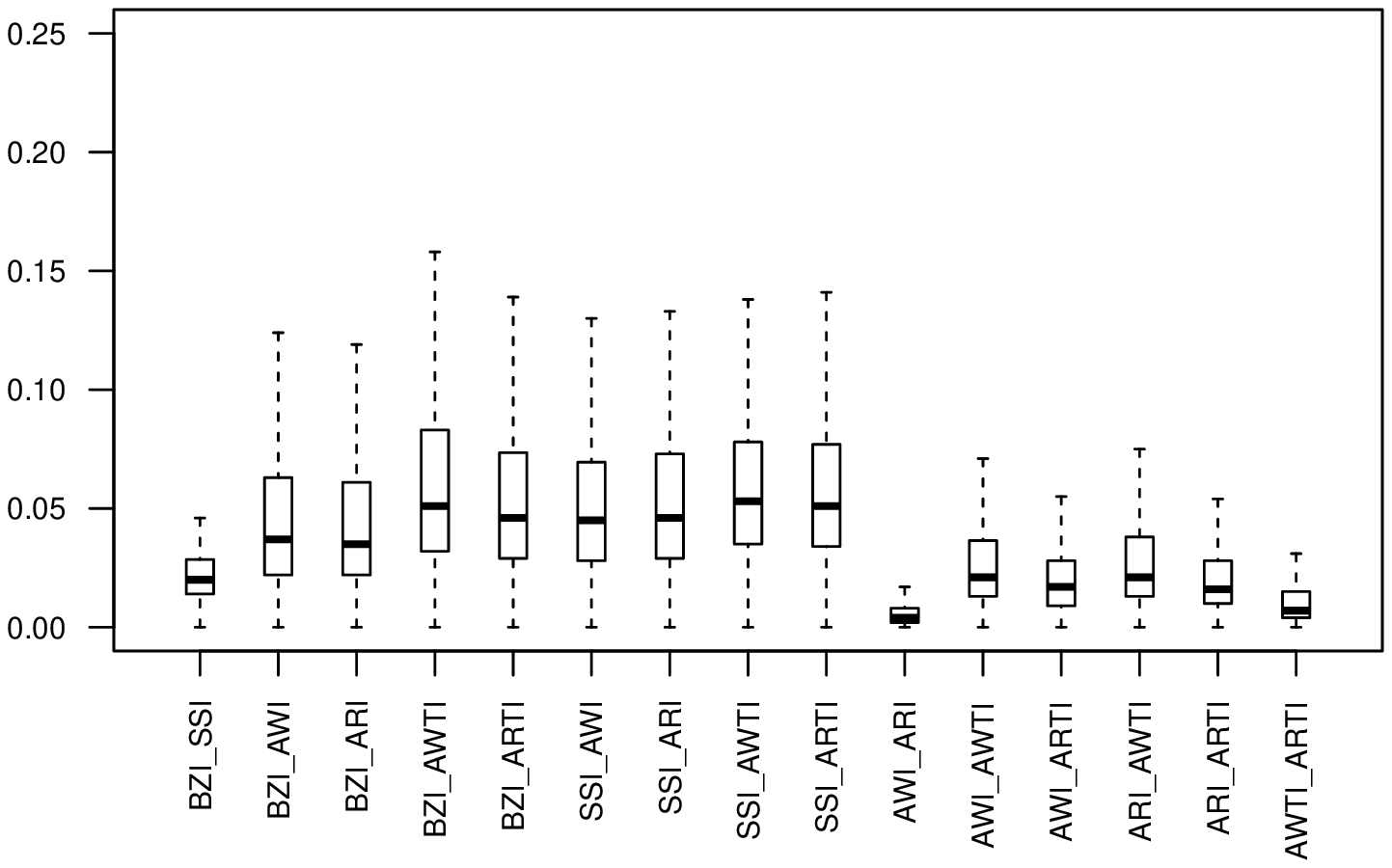}
}
\subfigure[$n\leq 7$]{
\includegraphics[scale=0.4]{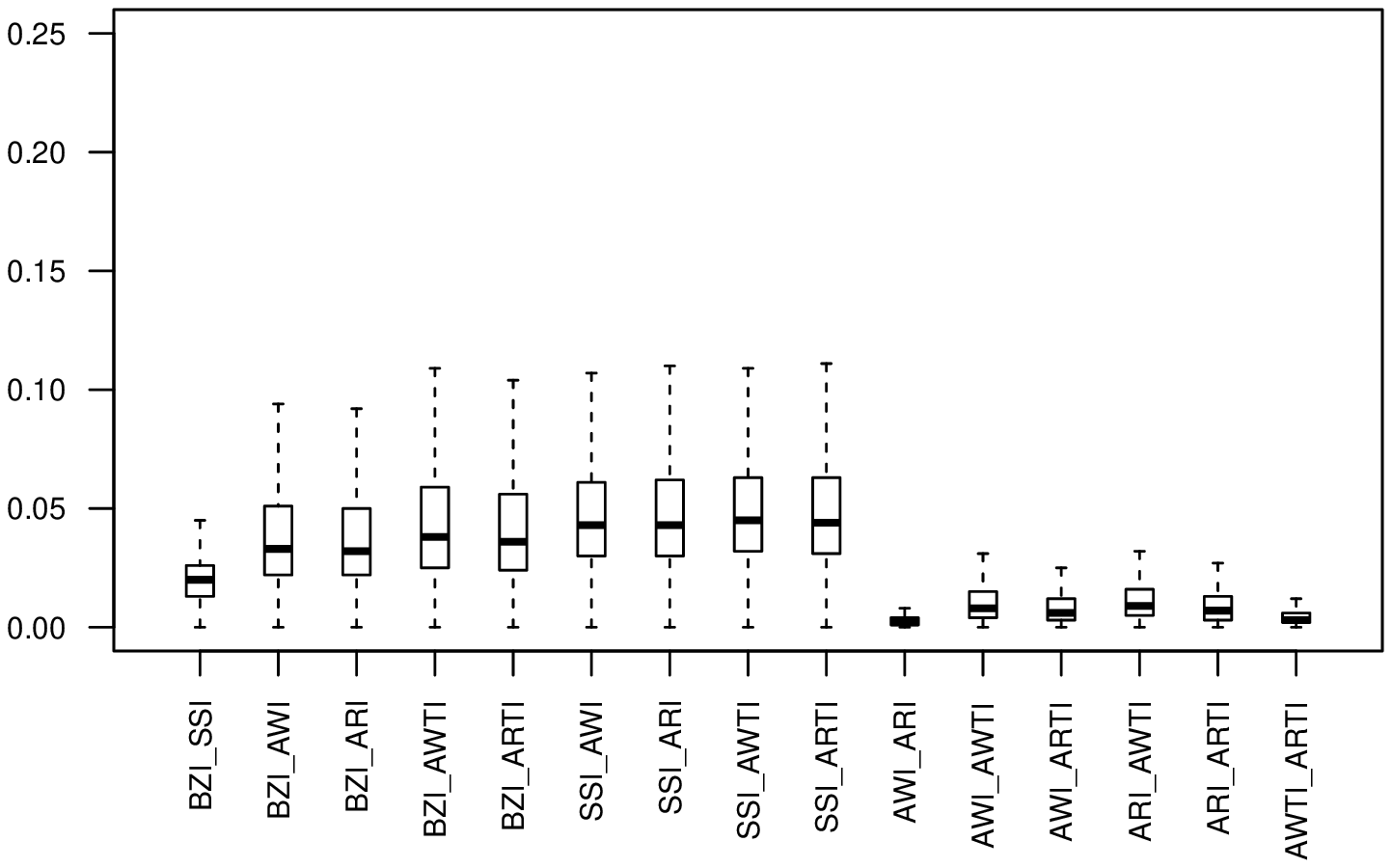}
}
\caption[]{The distribution of Euclidean distances between the indices.}
\end{figure}

\subsection{Immunities to voting paradoxes}\label{subsec:immunities}
\citeasnoun{felsen_mach1} identify three voting paradoxes to which any reasonable measure of power should not 
be liable. These are the bloc (\textbf{Bloc}), donation (\textbf{Don}) and bicameral meet (\textbf{Bic.Meet}) 
paradoxes. In the following, we provide examples showing that all of representation-compatible indices are 
liable to these paradoxes, so they have nothing to recommend in this department. But since none of the existing 
indices have all the required immunities, the question of which index to use cannot be settled based on immunity 
to paradoxes alone.

\noindent\textbf{Bloc paradox:} Respecting the bloc postulate means that if two or more voters form a bloc by 
adding their votes, the power of the bloc should not be lesser than the power of either voter alone. 
Table~\ref{tabfour} provides an example of a game, in which the smallest two voters form a bloc by joining 
their voting weights, and lose power as a result, if only slightly. The BZI and the SSI do not show the 
paradox in this example, although examples are known in which the BZI is liable to the bloc paradox. Also, 
the MSRI of \citeasnoun{freixas_kan1} displays the bloc paradox in this example.

\begin{longtable}[c]{lcccccccc}
\caption[]{\begin{minipage}[t]{8cm} Block paradox in game [37;25,20,17,15,9,6,2,1]. \end{minipage}}\label{tabfour}\endhead\hline\hline
Weight & 25 & 20 & 17 & 15 & 9 & 6 & 2 & 1 \\\hline
BZI & 0.274 & 0.226 & 0.188 & 0.168 & 0.063 & 0.053 & 0.0240 & 0.005 \\
SSI & 0.287 & 0.230 & 0.196 & 0.163 & 0.054 & 0.046 & 0.0202 & 0.004 \\
MSRI & 0.262 & 0.213 & 0.180 & 0.148 & 0.082 & 0.066 & 0.0328 & 0.016 \\
AWI & 0.267 & 0.226 & 0.196 & 0.140 & 0.082 & 0.056 & 0.0283 & 0.006 \\
ARI & 0.266 & 0.224 & 0.194 & 0.140 & 0.082 & 0.057 & 0.0288 & 0.007 \\
AWTI & 0.267 & 0.226 & 0.196 & 0.140 & 0.082 & 0.056 & 0.0283 & 0.006 \\
ARTI & 0.266 & 0.224 & 0.194 & 0.140 & 0.082 & 0.057 & 0.0288 & 0.007 \\\hline
Weight & 25 & 20 & 17 & 15 & 9 & 6 & 3 & 0 \\\hline
BZI & 0.282 & 0.223 & 0.185 & 0.165 & 0.068 & 0.049 & 0.0291 & 0 \\
SSI & 0.293 & 0.226 & 0.193 & 0.160 & 0.060 & 0.043 & 0.0262 & 0 \\
MSRI & 0.273 & 0.212 & 0.182 & 0.152 & 0.091 & 0.061 & 0.0303 & 0 \\
AWI & 0.272 & 0.225 & 0.197 & 0.140 & 0.087 & 0.051 & 0.0281 & 0 \\
ARI & 0.272 & 0.224 & 0.195 & 0.141 & 0.087 & 0.052 & 0.0284 & 0 \\
AWTI & 0.272 & 0.225 & 0.197 & 0.140 & 0.087 & 0.051 & 0.0281 & 0 \\
ARTI & 0.272 & 0.224 & 0.195 & 0.141 & 0.087 & 0.052 & 0.0284 & 0 \\\hline\hline
\end{longtable}

\noindent\textbf{Donation paradox:} Respecting donation means that if one voter gives some of her votes 
to another, the power of the donor should not increase as a result. \citeasnoun{felsen_mach1} provide 
examples in which the Banzhaf and Johnston indices show both bloc and donation paradoxes. 
\citeasnoun{freixas_molin3} study the frequency of the occurrence of the donation paradox in 
weighted games with a small number of players, providing examples for the Banzhaf and Johnston indices. 
The Shapley-Shubik index is immune to both the bloc and donation paradoxes. \citeasnoun{freixas_kan1} 
provide an example, which also shows that the MSR index is liable to the donation paradox.

\begin{longtable}[c]{lcccccc}
\caption[]{\begin{minipage}[t]{7cm} Donation paradox in game [13;9,4,3,2,1]. \end{minipage}}\label{tabfive}\endhead\hline\hline
Weight & 9 & 4 & 3 & 2 & 1 \\\hline
BZI & 0.524 & 0.238 & 0.143 & 0.048 & 0.048 \\
SSI & 0.617 & 0.200 & 0.117 & 0.033 & 0.033 \\
MSRI & 0.417 & 0.250 & 0.167 & 0.083 & 0.083 \\
AWI & 0.518 & 0.247 & 0.138 & 0.048 & 0.048 \\
ARI & 0.501 & 0.247 & 0.143 & 0.054 & 0.054 \\
AWTI & 0.548 & 0.258 & 0.123 & 0.035 & 0.035 \\
ARTI & 0.522 & 0.257 & 0.132 & 0.045 & 0.045 \\\hline
Weight & 8 & 5 & 3 & 2 & 1 \\\hline
BZI & 0.500 & 0.300 & 0.100 & 0.100 & 0 \\
SSI & 0.583 & 0.250 & 0.083 & 0.083 & 0 \\
MSRI & 0.429 & 0.286 & 0.143 & 0.143 & 0 \\
AWI & 0.535 & 0.270 & 0.098 & 0.098 & 0 \\
ARI & 0.513 & 0.273 & 0.107 & 0.107 & 0 \\
AWTI & 0.602 & 0.249 & 0.075 & 0.075 & 0 \\
ARTI & 0.558 & 0.258 & 0.092 & 0.092 & 0 \\\hline\hline
\end{longtable}

The example in Table~\ref{tabfive} shows that representation-compatible indices are liable to the 
donation paradox. In the game $[13;9,4,3,2,1]$, the largest voter gains power by donating one vote to 
the second largest voter according to the Shapley-Shubik index, but gains power according to all other 
indices. In example, the BZI and the MSRI shows the donation paradox. The Shapley-Shubik index is immune to 
both the bloc and donation paradoxes.

\noindent\textbf{The bicameral meet:} An index of power respects bicameral meet if the ratio of powers of 
any two voters belonging to the same assembly prior to a merge with a different assembly is preserved in 
the joint assembly. This property is useful when measuring the voting power of shareholders, because the 
relative powers of shareholders comprising a minority voting assembly with a total voting weight, represented 
by their joint holdings, carries over to the grand voting assembly, represented by the total worth of the 
company.

The bicameral meet of two simple voting games $(N_1, \mathcal W_1)$ and $(N_2, \mathcal W_2)$ is a simple 
voting game $(N, \mathcal W)$, with an assembly $N = N_1 \cup N_2$, and a set of winning coalitions 
$\mathcal W = \{S \subseteq N \; : \; S = S_1 \cup S_2, S_1 \in \mathcal W_1, S_2 \in \mathcal W_2 \}$. 
The two assemblies have no voters in common, so $N_1\cap N_2=\emptyset$. The bicameral meet postulate 
requires that if $i$ and $j$ are non-null voters in a game $(N_1, \mathcal W_1)$, then the ratio of 
power of voter $i$ to the power of voter $j$ in the joint game $(N, \mathcal W)$ should be equal to 
the ratio of their powers in the original game $(N_1, \mathcal W_1)$.

\begin{longtable}[c]{lccccc}
\caption[]{\begin{minipage}[t]{11cm} Added blocker paradox in game $[3;2,1,1]\cup[5;5]=[8;2,1,1,5]$. \end{minipage}}\label{tab6}\endhead\hline\hline
Weight & 2 & 1 & 1 & & Voter 1 / Voter 2 \\\hline
BZI & 0.600 & 0.200 & 0.200 & & 3 \\
SSI & 0.667 & 0.167 & 0.167 & & 4 \\
MSRI & 0.500 & 0.250 & 0.250 & & 2 \\
AWI & 0.611 & 0.194 & 0.194 & & 3.143 \\
ARI & 0.583 & 0.208 & 0.208 & & 2.8 \\
AWTI & 0.667 & 0.167 & 0.167 & & 4 \\
ARTI & 0.611 & 0.194 & 0.194 & & 3.143 \\\hline
Weight & 2 & 1 & 1 & 5 & Voter 1 / Voter 2 \\\hline
BZI & 0.375 & 0.125 & 0.125 & 0.375 & 3 \\
SSI & 0.417 & 0.083 & 0.083 & 0.417 & 5 \\
MSRI & 0.333 & 0.167 & 0.167 & 0.333 & 2 \\
AWI & 0.396 & 0.104 & 0.104 & 0.396 & 3.8 \\
ARI & 0.383 & 0.117 & 0.117 & 0.383 & 3.286 \\
AWTI & 0.375 & 0.125 & 0.125 & 0.375 & 3 \\
ARTI & 0.361 & 0.139 & 0.139 & 0.361 & 2.6 \\\hline\hline
\end{longtable}

\citeasnoun{freixas_kan1} prove that a bicameral meet of two complete games is complete if at least one 
of the two constituent games has only one minimum winning coalition. A special case of the bicameral 
meet postulate is the added blocker postulate, which says that adding a vetoer (Definition~\ref{def_vetoer}) 
to a weighted game should not change the ratio of powers of any two incumbent voters. If an index is 
liable to an added blocker paradox, it is also liable to the bicameral meet postulate, see 
\citeasnoun{felsen_mach1} (p. 270). In Table~\ref{tab7}, we use their example to shows that neither 
of representation-compatible indices satisfies the bicameral meet postulate. The example involves 
adding a blocker with a weight of 5 to the game $[3;2,1,1]$, and adjusting the quote in such a way 
that the set of minimal winning coalitions of the joint games equals the union of the sets of minimal 
winning coalitions in each game. This amounts to joining the games $[3;2,1,1]$ and $[5;5]$. Note that 
the blocker is a dictator is the added game. Since the second game has a single coalition, which is 
trivially minimal winning, the joint game is complete. The bicameral meet postulate does not hold for 
representation-compatible indices, because adding the blocker changes the power ratios of the players. 
This postulate is satisfied by the BZI and the MSRI.

There are many lesser paradoxes and other properties that may distinguish between different indices. 
One useful property is neutrality in symmetric voting games. In a symmetric weighted voting game, each 
player commands an equal number of votes. For a power measure to respect neutrality, the power of a 
voting bloc must equal the sum of individual powers of its members, so that satisfying the bloc postulate 
does not carry strategic implications. The power vectors for games $[3;1,1,1,1]$ and $[3;2,1,1]$ in 
Tables of Appendix~\ref{app:tables} clearly show that representation-compatible indices do not respect 
neutrality. This property is satisfied by the MSRI.

%%%%%%%%%%%%%%%%%%%%%%%%%%%%%%%%%%%%%%%%%%%%%%%%%%
\section{Integral weights and type preservation}\label{sec:integral_weights}
A normalization of voting weights is unreasonable if they represent the number of shares of a corporation 
or the number of members of a political party. In these cases, we require the weights to be integers. 
This observation has led to the development of a power index based on the minimum sum integer representations, 
called the MSR index \cite{freixas_kan1}.

Let us return to the weighted game $v=[2;1,1,1]$. We already mentioned that there exist $1176$ feasible 
integer weight vectors with the total weight of 100. The average of all these vectors equals 
$\left(\frac{100}{3},\frac{100}{3},\frac{100}{3}\right)$, yielding $\left(\frac{1}{3},\frac{1}{3},
\frac{1}{3}\right)$ as the average weight distribution, which is not surprising, given the inherent 
symmetry of the game.

Things get more interesting if we consider the game $v=[3;2,1,1]$. Table~\ref{tab7} lists the number 
of feasible integer weight vectors for an increasing total weight, as well as the average weight 
distributions. The distribution appears to converge to $\left(\frac{11}{18},\frac{7}{36},
\frac{7}{36}\right)$, which equals the AWI for this game. This convergence can be rigorously 
established by numerically approximating the integrals in Definition~\ref{def_indices} over a 
successively finer equally spaced grid inside the polytope. A similar result holds if an 
integer-valued quota is taken into account, in which case we obtain the ARI in the limit. 
The dummy-revealing and type-revealing property is also preserved in the limit.

To obtain power indices based on integer representations, we can minimize the sum of weights 
instead of taking it to infinity. Unfortunately, the minimum sum representations are not unique 
for $n\ge 8$ \cite{kurz1}. Nevertheless, one can take a convex combination of all such minimum sum 
representations, which yields, after a normalization, the MSR index recently introduced in 
\citeasnoun{freixas_kan1}.\footnote{Minimizing $q+\sum_{i=1}^n$ instead of $\sum_{i=1}^n$ makes 
no difference, so there is no need to distinguish the two cost functions.} One motivation for 
minimizing weights is to minimizing the cost of political representation by minimizing the 
number of representatives. Another motivation for minimum sum representation and the MSR 
index is given in \citeasnoun{ansolab_ea1}, who argue that many observations on the formation 
of coalition governments are more consistent with minimal integer-voting weights, rather than 
power distributions implied by the classical power indices.

The minimum sum integer representations are dummy-revealing, but not type-revealing. Take the 
two representations $(12,7,6,6,4,4,4,3,2)$ and $(12,7,6,6,4,4,4,2,3)$, which are minimum sum 
representations of the same game. Indeed, there exists no integer representation of this game 
with a weight sum smaller than 36. Both representations are not type-revealing since the equivalence 
classes of voters are given by $\{1\}$, $\{2,3\}$, $\{4,5,6\}$, and $\{7,8\}$, while $w_7\neq w_8$. 
Implementing one of the two representations may cause some confusion, as one might erroneously think 
that one of the two voters is more powerful than the other. The unique type-revealing minimum sum 
representation for this example is given by $[14;8,7,7,5,5,5,3,3]$. It has a total weight of 43 
instead of 36. Unfortunately, even the type-revealing minimum sum representation can be non-unique 
if at least $n=9$ voters are involved, see \citeasnoun{kurz1}. At the very least, one can define a 
minimum sum representation type-revealing index of a weighted game as a normalization of the convex 
combination of the corresponding set of type-revealing minimum sum representations. Some uniqueness 
results for both the minimum sum representation and the type-revealing minimum sum representation 
exist for special classes of weighted games. \citeasnoun{freixas_kurz1} proved that weighted games 
with up to two equivalence classes of voters admit a unique minimum sum representation. For minimum 
sum representations, the authors give non-unique examples for four equivalence classes of voters and 
conjecture a uniqueness result for three equivalence classes of voters.

For the algorithmic aspects of computing minimum sum representations and the MSR index, see 
\citeasnoun{kurz1}. The gist of this research is that the minimum sum representation can often be 
computed by solving a small sequence of linear programs. The computation of type-revealing minimum sum 
representations requires only minor adjustments.

\begin{longtable}[c]{p{3cm}p{6cm}p{5cm}}
\caption[]{\begin{minipage}[c]{18cm} Convergence of feasible integer weights for $[3;2,1,1]$.\end{minipage}}\label{tab7}\endhead\hline\hline
Total weight & Number of integer representations & Weight distribution\\\hline
100 & 1601 & (0.608832,0.195584,0.195584)\\
1000 & 166001 & (0.610888,0.194556,0.194556)\\
10000 & 16660001 & (0.611089,0.194456,0.194456)\\
100000 & 1666600001 & (0.611109,0.194446,0.194446)\\\hline\hline
\end{longtable}

%%%%%%%%%%%%%%%%%%%%%%%%%%%%%%%%%%%%%%%%%%%%%%%%%%
\section{Concluding remarks}\label{sec:summary}
The average representations of a weighted voting game can be used to obtain four representation-compatible 
indices of voting power for this type of voting game. The average representations are computed from weight 
and representation polytopes defined by the set of winning and losing coalitions of the game. The weight 
polytope is based on normalized voting weights, whereas the representation polytope also includes the quota.

These average representations come remarkably close to fulfilling the standard criteria for a coherent 
measure of voting power. They are symmetric, positive, efficient and strongly monotonic. But they do not 
respect the dummy property that assigns zero power to powerless players. This shortcoming is easily 
rectified by further restricting the polytopes. The resulting restricted average representations respect 
the dummy property and are coherent measures of power.

The above modification suggests that we can endow the indices with qualities by tailoring the polytope. 
Restrictions based on the equivalence classes of voters defined by the Isbell desirability relation lead 
to another pair of power indices, which ascribe equal power to all members of an equivalence class. These 
indices are strictly monotonic in voting weight.

The defining property of the indices is representation-compatibility, which ensures proportionality between 
power and weight. By redistributing weights among the voters, we can redesign any given weighted voting game 
in such a way that the distribution of voting weight will also be the distribution of voting power. This 
allows us to read power directly from the weights, a convenient property that recommends 
representation-compatible indices as optimal representations for weighted voting games, or optimal designs 
for voting institutions. The obvious disadvantage is the computational intensity of integrating monomials 
on highly-dimensional polytopes. Obtaining power distributions in weighted games with many players may 
require numerical integration based on random sampling.

Reflecting on the place representation-compatible indices may take among the existing measures of power, 
we believe that proportionality makes them ideal measures of power for voting institutions, in which the 
votes are distributed to the voter based on their contribution to a fixed purse. In this setting, voting 
power reflects the extent of a voter's control of the distribution of a fixed purse -- the ultimate outcome 
of voting, measured by that voter's expected share in the purse.\footnote{\citeasnoun{felsen_mach3} refer 
to this notion of power as P-power.} If a voter's expected share of spoils coincides with the voter's 
contribution to the fixed purse, an equilibrium emerges in which voters will not wish to redistribute votes. 
This leads to a stable institutional design of vote-for-money institutions, such as a corporation.

%%%%%%%%%%%%%%%%%%%%%%%%%%%%%%%%%%%%%%%%%%%%%%%%%%
\begin{footnotesize}
\bibliographystyle{kluwer}
%%\bibliography{refs}

\end{footnotesize}

\newpage

\begin{appendix}

\section{Appendix}
\subsection{Example [3;2,1,1]}\label{app:example}

The sets of minimal winning and maximal losing coalitions for the game $[3;2,1,1]$ are, respectively, $\{\{1,2\},\{1,3\},\{1,3\}\}$ and $\{\{1\},\{2,3\}\}$. Since voters 2 and 3 are equivalent, there are two equivalence classes in this game. There are no dummies.

Using Lemma~\ref{lemma_set_of_normalized_feasible_weights}, we obtain the following constraints:
\begin{eqnarray*}
w_1+w_2>w_1 & \Longleftrightarrow & w_2>0;\\
w_1+w_3>w_1 & \Longleftrightarrow & w_3>0;\\
w_1+w_2>w_2+w_3 & \Longleftrightarrow & w_1>w_3;\\
w_1+w_3>w_2+w_3 & \Longleftrightarrow & w_1>w_2.
\end{eqnarray*}
In addition, $w_1,w_2,w_3\ge 0$ and $w_1+w_2+w_3=1$. Eliminating $w_3$ and removing the redundant constraints yields the following inequalities: $w_2>0$, $w_2<1-w_1$, $w_2>1-2w_1$, $w_2<w_1$. Since $1-2w_1<w_1$ and $1-w_1>0$, we have $w_1\in\big(\frac{1}{3},1\big)$. For $w_1\in\big(\frac{1}{3},\frac{1}{2}\big)$, we have $w_2\in \big(1-2w_1,w_1\big)$. For $w_1\in\big[\frac{1}{2},1\big)$, we have $w_2\in \big(0,1-w_1\big)$. The polytope is thus given by
$$
\V^d(v)=\left\{(w_1,w_2)\in \mathbb{R}^2_{\ge 0}\mid w_2\ge 0, w_2\le 1-w_1, w_2\ge 1-2w_1, w_2\le w_1\right\}.
$$
Since there are no dummies in this game, the dummy-revealing polytope $\V^d(v)$ coincides with its non-revealing counterpart $\V(v)$.

For voter~$1$, we have
$$
\iint\limits_{\V^d} w_1\od w_1\od w_2=\int_{\frac{1}{3}}^{\frac{1}{2}}\od w_2\int_{1-2w_2}^{w_2} w_1\od w_1+\int_{\frac{1}{2}}^{1}\od w_2\int_{0}^{1-w_2} w_1\od w_1=\frac{1}{54}+\frac{1}{12}=\frac{11}{108}.
$$
For voter~$2$, we obtain
$$
\iint\limits_{\V^d} w_2\od w_1\od w_2=\int_{\frac{1}{3}}^{\frac{1}{2}}\od w_1\int_{1-2w_1}^{w_1} w_2\od w_2+\int_{\frac{1}{2}}^{1}\od w_1\int_{0}^{1-w_1} w_2\od w_2=\frac{1}{48}+\frac{5}{432}=\frac{7}{216}.
$$
The total volume of the polytope is given by
$$
\iint\limits_{\V^d}\od w_1\od w_2=\int_{\frac{1}{3}}^{\frac{1}{2}}\od w_1\int_{1-2w_1}^{w_1}\od w_2+\int_{\frac{1}{2}}^{1}\od w_1\int_{0}^{1-w_1}\od w_2=\frac{1}{8}+\frac{1}{24}=\frac{1}{6},
$$
This yields the following vector of average (normalized) feasible weights $\left(\frac{11}{18},\frac{7}{36},\frac{7}{36}\right)$.

The polytope for the average representation defined by Lemma~\ref{lemma_set_of_normalized_representations} is given by
$$
\R^d(v)=\left\{(q,w_1,w_2)\in\mathbb{R}^3_{\ge 0}\mid w_1+w_2\ge q,w_1\le q,1-w_1\le q,1-w_2\ge q\right\}.
$$
We have,
\begin{eqnarray*}
\iint\limits_{\R^d} w_1\od w_1\od w_2\od q
&=& \int_{\frac{1}{2}}^{\frac{2}{3}}\od q\int_{1-q}^{q}w_1\od w_1\int_{q-w_1}^{1-q}\od w_2+
\int_{\frac{2}{3}}^{1}\od q\int_{2q-1}^{q}w_1\od w_1\int_{q-w_1}^{1-q}\od w_2\\
&=& \frac{31}{7776}+\frac{1}{243}=\frac{7}{864}\\
\iint\limits_{\R^d} w_2\od w_1\od w_2\od q
&=& \int_{\frac{1}{2}}^{\frac{2}{3}}\od q\int_{1-q}^{q}\od w_1\int_{q-w_1}^{1-q}w_2\od w_2+
\int_{\frac{2}{3}}^{1}\od q\int_{2q-1}^{q}\od w_1\int_{q-w_1}^{1-q}w_2\od w_2\\
&=& \frac{29}{15552}+\frac{1}{972}=\frac{5}{1728},\\
\iint\limits_{\R^d}\od w_1\od w_2\od q
&=& \int_{\frac{1}{2}}^{\frac{2}{3}}\od q\int_{1-q}^{q}\od w_1\int_{q-w_1}^{1-q}\od w_2+
\int_{\frac{2}{3}}^{1}\od q\int_{2q-1}^{q}\od w_1\int_{q-w_1}^{1-q}\od w_2\\
&=& \frac{5}{648}+\frac{1}{162}=\frac{1}{72},
\end{eqnarray*}
so that the average representation is given by $\left(\frac{7}{12},\frac{5}{24},\frac{5}{24}\right)$.

The type-revealing power indices require all voters belonging to the same equivalence class to be equally powerful. This assumption is likely to reduce the dimension of the problem, as the number of equivalence classes is typically smaller than the number of voters. We now move from individual voting weights to weights aggregated by equivalence classes, as if voters belonging to the same class form a voting bloc with weight being equal to the sum of weights of its members.

The game has two classes: class A comprises voter 1, whereas voters 2 and 3 form class B. Let $w_a$ be the voting weight of class A, which equals the weight of the first voter $w_a=w_1$. The AWTI polytope degenerates to an interval
$$
\V^t(v)=\left\{w_a\in \mathbb{R}_{\ge 0}\mid 3w_a\ge 1, w_a\le 1\right\}.
$$
We have,
$$
\int\limits_{\V^t} w_a\od w_a=\int_{\frac{1}{3}}^1 w_a\od w_a=\frac{4}{9}\quad\mbox{and}\quad\int\limits_{\V^t} \od w_a=\int_{\frac{1}{3}}^1 \od w_a=\frac{2}{3}.
$$
The voting power of class A according to AWTI equals $\frac{2}{3}$, which is the power of the first voter. The power of class B equals $\frac{1}{3}$. Since all voters comprising a class share its power equally, the AWTI power vector for the voters reads $\left(\frac{2}{3},\frac{1}{6},\frac{1}{6}\right)$.

We now turn to the final index. The ARTI polytope is given by
$$
\R^t(v)=\left\{(q,w_a)\in \mathbb{R}_{\ge 0}^2\mid 3w_a\ge 1, 2q\le 1+w_a, q\ge 1-w_a, q\ge w_a\right\},
$$
where $w_a$ is the voting weight of class A. We have,

\begin{eqnarray*}
\iint\limits_{\R^t} w_a\od w_a\od q
&=& \int_{\frac{1}{3}}^{\frac{1}{2}}w_a\od w_a\int_{1-w_a}^{\frac{1+w_a}{2}}\od q+\int_{\frac{1}{2}}^{1}w_a\od w_a\int_{w_a}^{\frac{1+w_a}{2}}\od q=\frac{1}{108}+\frac{1}{24}=\frac{11}{216},\\
\iint\limits_{\R^t} \od w_a\od q
&=& \int_{\frac{1}{3}}^{\frac{1}{2}}\od w_a\int_{1-w_a}^{\frac{1+w_a}{2}}\od q+\int_{\frac{1}{2}}^{1}\od w_a\int_{w_a}^{\frac{1+w_a}{2}}\od q=\frac{1}{48}+\frac{1}{16}=\frac{1}{12}.
\end{eqnarray*}

The power distribution according to the ARTI is $\left(\frac{11}{18},\frac{7}{36},\frac{7}{36}\right)$.

\subsection{Average representation and type-preserving indices}\label{app:tables}

\renewcommand{\thetable}{A.\arabic{table}}
\setcounter{table}{0}

\begin{landscape}

\begin{footnotesize}
\begin{longtable}[c]{cccccc}
\caption[]{\begin{minipage}[t]{8cm} Average representation indices for $n\leq 5$.\end{minipage}}\label{tabA2a}\endhead\hline\hline
Game & AWI & ARI & Game & AWI & ARI\\\hline
$[1;1,0,0,0,0]$ & $(1.000,0.000,0.000,0.000,0.000)$ & $(1.000,0.000,0.000,0.000,0.000)$ & $[5;3,1,1,1,1]$ & $(0.502,0.125,0.125,0.125,0.125)$ & $(0.489,0.128,0.128,0.128,0.128)$\\
$[1;1,1,0,0,0]$ & $(0.500,0.500,0.000,0.000,0.000)$ & $(0.500,0.500,0.000,0.000,0.000)$ & $[5;3,2,2,2,1]$ & $(0.300,0.198,0.198,0.198,0.104)$ & $(0.300,0.199,0.199,0.199,0.103)$\\
$[1;1,1,1,0,0]$ & $(0.333,0.333,0.333,0.000,0.000)$ & $(0.333,0.333,0.333,0.000,0.000)$ & $[5;4,1,1,1,1]$ & $(0.586,0.104,0.104,0.104,0.104)$ & $(0.571,0.107,0.107,0.107,0.107)$\\
$[1;1,1,1,1,0]$ & $(0.250,0.250,0.250,0.250,0.000)$ & $(0.250,0.250,0.250,0.250,0.000)$ & $[5;4,2,2,1,1]$ & $(0.424,0.209,0.209,0.079,0.079)$ & $(0.420,0.207,0.207,0.083,0.083)$\\
$[1;1,1,1,1,1]$ & $(0.200,0.200,0.200,0.200,0.200)$ & $(0.200,0.200,0.200,0.200,0.200)$ & $[5;4,3,2,1,1]$ & $(0.382,0.297,0.173,0.074,0.074)$ & $(0.379,0.293,0.174,0.077,0.077)$\\
$[2;1,1,0,0,0]$ & $(0.500,0.500,0.000,0.000,0.000)$ & $(0.500,0.500,0.000,0.000,0.000)$ & $[5;4,3,2,2,1]$ & $(0.324,0.258,0.169,0.169,0.079)$ & $(0.326,0.256,0.169,0.169,0.080)$\\
$[2;2,1,1,0,0]$ & $(0.611,0.194,0.194,0.000,0.000)$ & $(0.583,0.208,0.208,0.000,0.000)$ & $[5;5,2,2,1,1]$ & $(0.555,0.172,0.172,0.050,0.050)$ & $(0.538,0.174,0.174,0.057,0.057)$\\
$[2;2,1,1,1,0]$ & $(0.479,0.174,0.174,0.174,0.000)$ & $(0.463,0.179,0.179,0.179,0.000)$ & $[5;5,3,2,1,1]$ & $(0.518,0.247,0.138,0.048,0.048)$ & $(0.501,0.247,0.143,0.054,0.054)$\\
$[2;2,1,1,1,1]$ & $(0.397,0.151,0.151,0.151,0.151)$ & $(0.387,0.153,0.153,0.153,0.153)$ & $[5;5,3,2,2,1]$ & $(0.478,0.211,0.134,0.134,0.043)$ & $(0.463,0.214,0.138,0.138,0.049)$\\
$[2;1,1,1,0,0]$ & $(0.333,0.333,0.333,0.000,0.000)$ & $(0.333,0.333,0.333,0.000,0.000)$ & $[5;2,2,2,1,1]$ & $(0.256,0.256,0.256,0.116,0.116)$ & $(0.255,0.255,0.255,0.117,0.117)$\\
$[2;2,2,1,1,0]$ & $(0.396,0.396,0.104,0.104,0.000)$ & $(0.383,0.383,0.117,0.117,0.000)$ & $[5;3,3,2,1,1]$ & $(0.319,0.319,0.200,0.081,0.081)$ & $(0.316,0.316,0.200,0.084,0.084)$\\
$[2;2,2,1,1,1]$ & $(0.340,0.340,0.107,0.107,0.107)$ & $(0.331,0.331,0.113,0.113,0.113)$ & $[5;3,3,2,2,1]$ & $(0.286,0.286,0.189,0.189,0.050)$ & $(0.284,0.284,0.188,0.188,0.057)$\\
$[2;1,1,1,1,0]$ & $(0.250,0.250,0.250,0.250,0.000)$ & $(0.250,0.250,0.250,0.250,0.000)$ & $[6;2,2,2,1,1]$ & $(0.249,0.249,0.249,0.127,0.127)$ & $(0.249,0.249,0.249,0.127,0.127)$\\
$[2;2,2,2,1,1]$ & $(0.290,0.290,0.290,0.065,0.065)$ & $(0.283,0.283,0.283,0.075,0.075)$ & $[6;2,2,1,1,1]$ & $(0.340,0.340,0.107,0.107,0.107)$ & $(0.331,0.331,0.113,0.113,0.113)$\\
$[2;1,1,1,1,1]$ & $(0.200,0.200,0.200,0.200,0.200)$ & $(0.200,0.200,0.200,0.200,0.200)$ & $[6;3,2,1,1,1]$ & $(0.457,0.237,0.102,0.102,0.102)$ & $(0.443,0.239,0.106,0.106,0.106)$\\
$[3;1,1,1,0,0]$ & $(0.333,0.333,0.333,0.000,0.000)$ & $(0.333,0.333,0.333,0.000,0.000)$ & $[6;4,2,1,1,1]$ & $(0.532,0.214,0.085,0.085,0.085)$ & $(0.517,0.215,0.089,0.089,0.089)$\\
$[3;2,1,1,0,0]$ & $(0.611,0.194,0.194,0.000,0.000)$ & $(0.583,0.208,0.208,0.000,0.000)$ & $[6;3,3,1,1,1]$ & $(0.353,0.353,0.098,0.098,0.098)$ & $(0.350,0.350,0.100,0.100,0.100)$\\
$[3;2,1,1,1,0]$ & $(0.438,0.188,0.188,0.188,0.000)$ & $(0.430,0.190,0.190,0.190,0.000)$ & $[6;3,3,2,1,1]$ & $(0.319,0.319,0.200,0.081,0.081)$ & $(0.316,0.316,0.200,0.084,0.084)$\\
$[3;2,1,1,1,1]$ & $(0.345,0.164,0.164,0.164,0.164)$ & $(0.343,0.164,0.164,0.164,0.164)$ & $[6;3,3,2,2,2]$ & $(0.247,0.247,0.169,0.169,0.169)$ & $(0.248,0.248,0.168,0.168,0.168)$\\
$[3;3,1,1,1,0]$ & $(0.600,0.133,0.133,0.133,0.000)$ & $(0.580,0.140,0.140,0.140,0.000)$ & $[6;3,2,2,1,1]$ & $(0.333,0.221,0.221,0.112,0.112)$ & $(0.333,0.221,0.221,0.112,0.112)$\\
$[3;3,2,1,1,0]$ & $(0.535,0.270,0.098,0.098,0.000)$ & $(0.513,0.273,0.107,0.107,0.000)$ & $[6;4,2,2,1,1]$ & $(0.424,0.209,0.209,0.079,0.079)$ & $(0.420,0.207,0.207,0.083,0.083)$\\
$[3;3,2,1,1,1]$ & $(0.457,0.237,0.102,0.102,0.102)$ & $(0.443,0.239,0.106,0.106,0.106)$ & $[6;3,2,2,2,1]$ & $(0.300,0.198,0.198,0.198,0.104)$ & $(0.300,0.199,0.199,0.199,0.103)$\\
$[3;3,1,1,1,1]$ & $(0.502,0.125,0.125,0.125,0.125)$ & $(0.489,0.128,0.128,0.128,0.128)$ & $[6;4,3,3,1,1]$ & $(0.367,0.261,0.261,0.056,0.056)$ & $(0.361,0.259,0.259,0.060,0.060)$\\
$[3;3,2,2,1,1]$ & $(0.424,0.198,0.198,0.090,0.090)$ & $(0.409,0.202,0.202,0.093,0.093)$ & $[6;4,3,3,2,1]$ & $(0.299,0.238,0.238,0.150,0.075)$ & $(0.300,0.237,0.237,0.151,0.076)$\\
$[3;1,1,1,1,0]$ & $(0.250,0.250,0.250,0.250,0.000)$ & $(0.250,0.250,0.250,0.250,0.000)$ & $[6;4,3,2,2,1]$ & $(0.354,0.275,0.152,0.152,0.067)$ & $(0.350,0.271,0.154,0.154,0.070)$\\
$[3;2,2,1,1,0]$ & $(0.346,0.346,0.154,0.154,0.000)$ & $(0.343,0.343,0.157,0.157,0.000)$ & $[6;5,2,2,2,1]$ & $(0.449,0.169,0.169,0.169,0.045)$ & $(0.444,0.168,0.168,0.168,0.052)$\\
$[3;2,2,1,1,1]$ & $(0.295,0.295,0.136,0.136,0.136)$ & $(0.294,0.294,0.138,0.138,0.138)$ & $[6;5,3,3,1,1]$ & $(0.416,0.245,0.245,0.047,0.047)$ & $(0.411,0.243,0.243,0.052,0.052)$\\
$[3;3,3,1,1,1]$ & $(0.390,0.390,0.073,0.073,0.073)$ & $(0.381,0.381,0.079,0.079,0.079)$ & $[6;5,4,2,2,1]$ & $(0.374,0.323,0.132,0.132,0.039)$ & $(0.371,0.317,0.134,0.134,0.045)$\\
$[3;3,3,2,1,1]$ & $(0.364,0.364,0.155,0.059,0.059)$ & $(0.353,0.353,0.163,0.065,0.065)$ & $[7;2,2,2,1,1]$ & $(0.290,0.290,0.290,0.065,0.065)$ & $(0.283,0.283,0.283,0.075,0.075)$\\
$[3;1,1,1,1,1]$ & $(0.200,0.200,0.200,0.200,0.200)$ & $(0.200,0.200,0.200,0.200,0.200)$ & $[7;3,2,2,1,1]$ & $(0.424,0.198,0.198,0.090,0.090)$ & $(0.409,0.202,0.202,0.093,0.093)$\\\hline\hline
\end{longtable}
\end{footnotesize}

\newpage

\begin{footnotesize}
\begin{longtable}[c]{cccccc}
\caption[]{\begin{minipage}[t]{8cm} Average representation indices for $n\leq 5$ (cont.).\end{minipage}}\label{tabA2a_cont}\endhead\hline\hline
Game & AWI & ARI & Game & AWI & ARI\\\hline
$[3;2,2,2,1,1]$ & $(0.249,0.249,0.249,0.127,0.127)$ & $(0.249,0.249,0.249,0.127,0.127)$ & $[7;3,3,1,1,1]$ & $(0.390,0.390,0.073,0.073,0.073)$ & $(0.381,0.381,0.079,0.079,0.079)$\\
$[4;1,1,1,1,0]$ & $(0.250,0.250,0.250,0.250,0.000)$ & $(0.250,0.250,0.250,0.250,0.000)$ & $[7;3,3,2,2,1]$ & $(0.286,0.286,0.189,0.189,0.050)$ & $(0.284,0.284,0.188,0.188,0.057)$\\
$[4;2,2,1,1,0]$ & $(0.346,0.346,0.154,0.154,0.000)$ & $(0.343,0.343,0.157,0.157,0.000)$ & $[7;4,3,1,1,1]$ & $(0.495,0.290,0.072,0.072,0.072)$ & $(0.479,0.292,0.076,0.076,0.076)$\\
$[4;2,2,1,1,1]$ & $(0.300,0.300,0.133,0.133,0.133)$ & $(0.298,0.298,0.135,0.135,0.135)$ & $[7;4,3,2,1,1]$ & $(0.382,0.297,0.173,0.074,0.074)$ & $(0.379,0.293,0.174,0.077,0.077)$\\
$[4;2,1,1,1,0]$ & $(0.479,0.174,0.174,0.174,0.000)$ & $(0.463,0.179,0.179,0.179,0.000)$ & $[7;4,3,2,2,1]$ & $(0.354,0.275,0.152,0.152,0.067)$ & $(0.350,0.271,0.154,0.154,0.070)$\\
$[4;2,1,1,1,1]$ & $(0.345,0.164,0.164,0.164,0.164)$ & $(0.343,0.164,0.164,0.164,0.164)$ & $[7;3,2,2,2,1]$ & $(0.310,0.212,0.212,0.212,0.053)$ & $(0.308,0.210,0.210,0.210,0.061)$\\
$[4;3,1,1,1,0]$ & $(0.600,0.133,0.133,0.133,0.000)$ & $(0.580,0.140,0.140,0.140,0.000)$ & $[7;4,2,2,1,1]$ & $(0.488,0.177,0.177,0.078,0.078)$ & $(0.474,0.181,0.181,0.082,0.082)$\\
$[4;3,2,2,1,0]$ & $(0.402,0.258,0.258,0.081,0.000)$ & $(0.397,0.257,0.257,0.090,0.000)$ & $[7;5,2,2,1,1]$ & $(0.555,0.172,0.172,0.050,0.050)$ & $(0.538,0.174,0.174,0.057,0.057)$\\
$[4;3,1,1,1,1]$ & $(0.467,0.133,0.133,0.133,0.133)$ & $(0.460,0.135,0.135,0.135,0.135)$ & $[7;4,3,3,1,1]$ & $(0.367,0.261,0.261,0.056,0.056)$ & $(0.361,0.259,0.259,0.060,0.060)$\\
$[4;3,2,2,1,1]$ & $(0.333,0.221,0.221,0.112,0.112)$ & $(0.333,0.221,0.221,0.112,0.112)$ & $[7;4,3,3,2,2]$ & $(0.273,0.218,0.218,0.146,0.146)$ & $(0.275,0.217,0.217,0.145,0.145)$\\
$[4;3,2,1,1,1]$ & $(0.391,0.259,0.117,0.117,0.117)$ & $(0.388,0.257,0.118,0.118,0.118)$ & $[7;5,2,2,2,1]$ & $(0.449,0.169,0.169,0.169,0.045)$ & $(0.444,0.168,0.168,0.168,0.052)$\\
$[4;3,2,2,2,1]$ & $(0.310,0.212,0.212,0.212,0.053)$ & $(0.308,0.210,0.210,0.210,0.061)$ & $[7;5,3,3,2,1]$ & $(0.392,0.225,0.225,0.117,0.041)$ & $(0.386,0.223,0.223,0.121,0.046)$\\
$[4;4,1,1,1,1]$ & $(0.586,0.104,0.104,0.104,0.104)$ & $(0.571,0.107,0.107,0.107,0.107)$ & $[7;5,4,3,2,1]$ & $(0.354,0.305,0.190,0.114,0.037)$ & $(0.350,0.299,0.191,0.117,0.042)$\\
$[4;4,2,2,1,1]$ & $(0.488,0.177,0.177,0.078,0.078)$ & $(0.474,0.181,0.181,0.082,0.082)$ & $[7;3,3,2,2,2]$ & $(0.247,0.247,0.169,0.169,0.169)$ & $(0.248,0.248,0.168,0.168,0.168)$\\
$[4;4,2,1,1,1]$ & $(0.532,0.214,0.085,0.085,0.085)$ & $(0.517,0.215,0.089,0.089,0.089)$ & $[8;3,3,2,1,1]$ & $(0.364,0.364,0.155,0.059,0.059)$ & $(0.353,0.353,0.163,0.065,0.065)$\\
$[4;4,3,1,1,1]$ & $(0.495,0.290,0.072,0.072,0.072)$ & $(0.479,0.292,0.076,0.076,0.076)$ & $[8;3,3,2,2,1]$ & $(0.276,0.276,0.182,0.182,0.084)$ & $(0.275,0.275,0.182,0.182,0.085)$\\
$[4;4,3,2,2,1]$ & $(0.440,0.224,0.145,0.145,0.045)$ & $(0.423,0.228,0.149,0.149,0.052)$ & $[8;4,3,2,2,1]$ & $(0.324,0.258,0.169,0.169,0.079)$ & $(0.326,0.256,0.169,0.169,0.080)$\\
$[4;1,1,1,1,1]$ & $(0.200,0.200,0.200,0.200,0.200)$ & $(0.200,0.200,0.200,0.200,0.200)$ & $[8;5,3,2,1,1]$ & $(0.518,0.247,0.138,0.048,0.048)$ & $(0.501,0.247,0.143,0.054,0.054)$\\
$[4;2,2,2,1,1]$ & $(0.256,0.256,0.256,0.116,0.116)$ & $(0.255,0.255,0.255,0.117,0.117)$ & $[8;4,3,3,2,1]$ & $(0.299,0.238,0.238,0.150,0.075)$ & $(0.300,0.237,0.237,0.151,0.076)$\\
$[4;3,3,1,1,1]$ & $(0.353,0.353,0.098,0.098,0.098)$ & $(0.350,0.350,0.100,0.100,0.100)$ & $[8;5,3,3,2,1]$ & $(0.392,0.225,0.225,0.117,0.041)$ & $(0.386,0.223,0.223,0.121,0.046)$\\
$[4;3,3,2,2,1]$ & $(0.276,0.276,0.182,0.182,0.084)$ & $(0.275,0.275,0.182,0.182,0.085)$ & $[8;5,3,3,1,1]$ & $(0.416,0.245,0.245,0.047,0.047)$ & $(0.411,0.243,0.243,0.052,0.052)$\\
$[5;1,1,1,1,1]$ & $(0.200,0.200,0.200,0.200,0.200)$ & $(0.200,0.200,0.200,0.200,0.200)$ & $[8;5,4,3,2,2]$ & $(0.328,0.285,0.169,0.109,0.109)$ & $(0.325,0.279,0.172,0.112,0.112)$\\
$[5;2,2,1,1,0]$ & $(0.396,0.396,0.104,0.104,0.000)$ & $(0.383,0.383,0.117,0.117,0.000)$ & $[8;4,3,3,2,2]$ & $(0.273,0.218,0.218,0.146,0.146)$ & $(0.275,0.217,0.217,0.145,0.145)$\\
$[5;2,2,1,1,1]$ & $(0.295,0.295,0.136,0.136,0.136)$ & $(0.294,0.294,0.138,0.138,0.138)$ & $[9;4,3,2,2,1]$ & $(0.440,0.224,0.145,0.145,0.045)$ & $(0.423,0.228,0.149,0.149,0.052)$\\
$[5;3,2,1,1,0]$ & $(0.535,0.270,0.098,0.098,0.000)$ & $(0.513,0.273,0.107,0.107,0.000)$ & $[9;5,3,2,2,1]$ & $(0.478,0.211,0.134,0.134,0.043)$ & $(0.463,0.214,0.138,0.138,0.049)$\\
$[5;3,2,1,1,1]$ & $(0.391,0.259,0.117,0.117,0.117)$ & $(0.388,0.257,0.118,0.118,0.118)$ & $[9;5,4,2,2,1]$ & $(0.374,0.323,0.132,0.132,0.039)$ & $(0.371,0.317,0.134,0.134,0.045)$\\
$[5;2,1,1,1,1]$ & $(0.397,0.151,0.151,0.151,0.151)$ & $(0.387,0.153,0.153,0.153,0.153)$ & $[9;5,4,3,2,1]$ & $(0.354,0.305,0.190,0.114,0.037)$ & $(0.350,0.299,0.191,0.117,0.042)$\\
$[5;3,2,2,1,1]$ & $(0.367,0.233,0.233,0.083,0.083)$ & $(0.361,0.231,0.231,0.088,0.088)$ & $[9;5,4,3,2,2]$ & $(0.328,0.285,0.169,0.109,0.109)$ & $(0.325,0.279,0.172,0.112,0.112)$\\
$[5;3,2,2,1,0]$ & $(0.402,0.258,0.258,0.081,0.000)$ & $(0.397,0.257,0.257,0.090,0.000)$ & & &\\\hline\hline
\end{longtable}
\end{footnotesize}

\newpage

\begin{footnotesize}
\begin{longtable}[c]{cccccc}
\caption[]{\begin{minipage}[t]{12cm} Average representation type preserving indices for $n\leq 5$.\end{minipage}}\label{tabA2b}\endhead\hline\hline
Game & AWTI & ARTI & Game & AWTI & ARTI\\\hline
$[1;1,0,0,0,0]$ & $(1.000,0.000,0.000,0.000,0.000)$ & $(1.000,0.000,0.000,0.000,0.000)$ & $[5;3,1,1,1,1]$ & $(0.667,0.083,0.083,0.083,0.083)$ & $(0.587,0.103,0.103,0.103,0.103)$\\
$[1;1,1,0,0,0]$ & $(0.500,0.500,0.000,0.000,0.000)$ & $(0.500,0.500,0.000,0.000,0.000)$ & $[5;3,2,2,2,1]$ & $(0.294,0.206,0.206,0.206,0.089)$ & $(0.296,0.204,0.204,0.204,0.092)$\\
$[1;1,1,1,0,0]$ & $(0.333,0.333,0.333,0.000,0.000)$ & $(0.333,0.333,0.333,0.000,0.000)$ & $[5;4,1,1,1,1]$ & $(0.714,0.071,0.071,0.071,0.071)$ & $(0.643,0.089,0.089,0.089,0.089)$\\
$[1;1,1,1,1,0]$ & $(0.250,0.250,0.250,0.250,0.000)$ & $(0.250,0.250,0.250,0.250,0.000)$ & $[5;4,2,2,1,1]$ & $(0.407,0.198,0.198,0.099,0.099)$ & $(0.405,0.199,0.199,0.099,0.099)$\\
$[1;1,1,1,1,1]$ & $(0.200,0.200,0.200,0.200,0.200)$ & $(0.200,0.200,0.200,0.200,0.200)$ & $[5;4,3,2,1,1]$ & $(0.378,0.321,0.166,0.068,0.068)$ & $(0.375,0.311,0.169,0.072,0.072)$\\
$[2;1,1,0,0,0]$ & $(0.500,0.500,0.000,0.000,0.000)$ & $(0.500,0.500,0.000,0.000,0.000)$ & $[5;4,3,2,2,1]$ & $(0.322,0.239,0.170,0.170,0.099)$ & $(0.324,0.241,0.169,0.169,0.096)$\\
$[2;2,1,1,0,0]$ & $(0.667,0.167,0.167,0.000,0.000)$ & $(0.611,0.194,0.194,0.000,0.000)$ & $[5;5,2,2,1,1]$ & $(0.625,0.137,0.137,0.051,0.051)$ & $(0.582,0.148,0.148,0.061,0.061)$\\
$[2;2,1,1,1,0]$ & $(0.625,0.125,0.125,0.125,0.000)$ & $(0.550,0.150,0.150,0.150,0.000)$ & $[5;5,3,2,1,1]$ & $(0.548,0.258,0.123,0.035,0.035)$ & $(0.522,0.257,0.132,0.045,0.045)$\\
$[2;2,1,1,1,1]$ & $(0.600,0.100,0.100,0.100,0.100)$ & $(0.511,0.122,0.122,0.122,0.122)$ & $[5;5,3,2,2,1]$ & $(0.514,0.180,0.126,0.126,0.054)$ & $(0.488,0.190,0.132,0.132,0.058)$\\
$[2;1,1,1,0,0]$ & $(0.333,0.333,0.333,0.000,0.000)$ & $(0.333,0.333,0.333,0.000,0.000)$ & $[5;2,2,2,1,1]$ & $(0.267,0.267,0.267,0.100,0.100)$ & $(0.261,0.261,0.261,0.108,0.108)$\\
$[2;2,2,1,1,0]$ & $(0.375,0.375,0.125,0.125,0.000)$ & $(0.361,0.361,0.139,0.139,0.000)$ & $[5;3,3,2,1,1]$ & $(0.356,0.356,0.175,0.056,0.056)$ & $(0.342,0.342,0.182,0.067,0.067)$\\
$[2;2,2,1,1,1]$ & $(0.350,0.350,0.100,0.100,0.100)$ & $(0.329,0.329,0.114,0.114,0.114)$ & $[5;3,3,2,2,1]$ & $(0.265,0.265,0.193,0.193,0.084)$ & $(0.267,0.267,0.190,0.190,0.086)$\\
$[2;1,1,1,1,0]$ & $(0.250,0.250,0.250,0.250,0.000)$ & $(0.250,0.250,0.250,0.250,0.000)$ & $[6;2,2,2,1,1]$ & $(0.267,0.267,0.267,0.100,0.100)$ & $(0.261,0.261,0.261,0.108,0.108)$\\
$[2;2,2,2,1,1]$ & $(0.267,0.267,0.267,0.100,0.100)$ & $(0.261,0.261,0.261,0.108,0.108)$ & $[6;2,2,1,1,1]$ & $(0.350,0.350,0.100,0.100,0.100)$ & $(0.329,0.329,0.114,0.114,0.114)$\\
$[2;1,1,1,1,1]$ & $(0.200,0.200,0.200,0.200,0.200)$ & $(0.200,0.200,0.200,0.200,0.200)$ & $[6;3,2,1,1,1]$ & $(0.540,0.173,0.096,0.096,0.096)$ & $(0.499,0.192,0.103,0.103,0.103)$\\
$[3;1,1,1,0,0]$ & $(0.333,0.333,0.333,0.000,0.000)$ & $(0.333,0.333,0.333,0.000,0.000)$ & $[6;4,2,1,1,1]$ & $(0.620,0.224,0.052,0.052,0.052)$ & $(0.576,0.223,0.067,0.067,0.067)$\\
$[3;2,1,1,0,0]$ & $(0.667,0.167,0.167,0.000,0.000)$ & $(0.611,0.194,0.194,0.000,0.000)$ & $[6;3,3,1,1,1]$ & $(0.393,0.393,0.071,0.071,0.071)$ & $(0.373,0.373,0.085,0.085,0.085)$\\
$[3;2,1,1,1,0]$ & $(0.375,0.208,0.208,0.208,0.000)$ & $(0.383,0.206,0.206,0.206,0.000)$ & $[6;3,3,2,1,1]$ & $(0.356,0.356,0.175,0.056,0.056)$ & $(0.342,0.342,0.182,0.067,0.067)$\\
$[3;2,1,1,1,1]$ & $(0.314,0.171,0.171,0.171,0.171)$ & $(0.321,0.170,0.170,0.170,0.170)$ & $[6;3,3,2,2,2]$ & $(0.243,0.243,0.171,0.171,0.171)$ & $(0.245,0.245,0.170,0.170,0.170)$\\
$[3;3,1,1,1,0]$ & $(0.700,0.100,0.100,0.100,0.000)$ & $(0.633,0.122,0.122,0.122,0.000)$ & $[6;3,2,2,1,1]$ & $(0.311,0.233,0.233,0.111,0.111)$ & $(0.317,0.231,0.231,0.111,0.111)$\\
$[3;3,2,1,1,0]$ & $(0.602,0.249,0.075,0.075,0.000)$ & $(0.558,0.258,0.092,0.092,0.000)$ & $[6;4,2,2,1,1]$ & $(0.407,0.198,0.198,0.099,0.099)$ & $(0.405,0.199,0.199,0.099,0.099)$\\
$[3;3,2,1,1,1]$ & $(0.540,0.173,0.096,0.096,0.096)$ & $(0.499,0.192,0.103,0.103,0.103)$ & $[6;3,2,2,2,1]$ & $(0.294,0.206,0.206,0.206,0.089)$ & $(0.296,0.204,0.204,0.204,0.092)$\\
$[3;3,1,1,1,1]$ & $(0.667,0.083,0.083,0.083,0.083)$ & $(0.587,0.103,0.103,0.103,0.103)$ & $[6;4,3,3,1,1]$ & $(0.361,0.269,0.269,0.051,0.051)$ & $(0.354,0.264,0.264,0.059,0.059)$\\
$[3;3,2,2,1,1]$ & $(0.528,0.176,0.176,0.060,0.060)$ & $(0.479,0.188,0.188,0.073,0.073)$ & $[6;4,3,3,2,1]$ & $(0.302,0.244,0.244,0.152,0.059)$ & $(0.303,0.241,0.241,0.152,0.062)$\\
$[3;1,1,1,1,0]$ & $(0.250,0.250,0.250,0.250,0.000)$ & $(0.250,0.250,0.250,0.250,0.000)$ & $[6;4,3,2,2,1]$ & $(0.359,0.288,0.148,0.148,0.057)$ & $(0.354,0.280,0.152,0.152,0.062)$\\
$[3;2,2,1,1,0]$ & $(0.375,0.375,0.125,0.125,0.000)$ & $(0.361,0.361,0.139,0.139,0.000)$ & $[6;5,2,2,2,1]$ & $(0.415,0.170,0.170,0.170,0.075)$ & $(0.415,0.169,0.169,0.169,0.077)$\\
$[3;2,2,1,1,1]$ & $(0.267,0.267,0.156,0.156,0.156)$ & $(0.273,0.273,0.151,0.151,0.151)$ & $[6;5,3,3,1,1]$ & $(0.394,0.258,0.258,0.045,0.045)$ & $(0.392,0.251,0.251,0.053,0.053)$\\
$[3;3,3,1,1,1]$ & $(0.393,0.393,0.071,0.071,0.071)$ & $(0.373,0.373,0.085,0.085,0.085)$ & $[6;5,4,2,2,1]$ & $(0.384,0.321,0.123,0.123,0.049)$ & $(0.379,0.314,0.127,0.127,0.054)$\\
$[3;3,3,2,1,1]$ & $(0.356,0.356,0.175,0.056,0.056)$ & $(0.342,0.342,0.182,0.067,0.067)$ & $[7;2,2,2,1,1]$ & $(0.267,0.267,0.267,0.100,0.100)$ & $(0.261,0.261,0.261,0.108,0.108)$\\
$[3;1,1,1,1,1]$ & $(0.200,0.200,0.200,0.200,0.200)$ & $(0.200,0.200,0.200,0.200,0.200)$ & $[7;3,2,2,1,1]$ & $(0.528,0.176,0.176,0.060,0.060)$ & $(0.479,0.188,0.188,0.073,0.073)$\\\hline\hline
\end{longtable}
\end{footnotesize}

\newpage

\begin{footnotesize}
\begin{longtable}[c]{cccccc}
\caption[]{\begin{minipage}[t]{12cm} Average representation type preserving indices for $n\leq 5$ (cont.).\end{minipage}}\label{tabA2b_cont}\endhead\hline\hline
Game & AWTI & ARTI & Game & AWTI & ARTI\\\hline
$[3;2,2,2,1,1]$ & $(0.267,0.267,0.267,0.100,0.100)$ & $(0.261,0.261,0.261,0.108,0.108)$ & $[7;3,3,1,1,1]$ & $(0.393,0.393,0.071,0.071,0.071)$ & $(0.373,0.373,0.085,0.085,0.085)$\\
$[4;1,1,1,1,0]$ & $(0.250,0.250,0.250,0.250,0.000)$ & $(0.250,0.250,0.250,0.250,0.000)$ & $[7;3,3,2,2,1]$ & $(0.265,0.265,0.193,0.193,0.084)$ & $(0.267,0.267,0.190,0.190,0.086)$\\
$[4;2,2,1,1,0]$ & $(0.375,0.375,0.125,0.125,0.000)$ & $(0.361,0.361,0.139,0.139,0.000)$ & $[7;4,3,1,1,1]$ & $(0.603,0.261,0.045,0.045,0.045)$ & $(0.552,0.271,0.059,0.059,0.059)$\\
$[4;2,2,1,1,1]$ & $(0.375,0.375,0.083,0.083,0.083)$ & $(0.345,0.345,0.103,0.103,0.103)$ & $[7;4,3,2,1,1]$ & $(0.378,0.321,0.166,0.068,0.068)$ & $(0.375,0.311,0.169,0.072,0.072)$\\
$[4;2,1,1,1,0]$ & $(0.625,0.125,0.125,0.125,0.000)$ & $(0.550,0.150,0.150,0.150,0.000)$ & $[7;4,3,2,2,1]$ & $(0.359,0.288,0.148,0.148,0.057)$ & $(0.354,0.280,0.152,0.152,0.062)$\\
$[4;2,1,1,1,1]$ & $(0.314,0.171,0.171,0.171,0.171)$ & $(0.321,0.170,0.170,0.170,0.170)$ & $[7;3,2,2,2,1]$ & $(0.294,0.206,0.206,0.206,0.089)$ & $(0.296,0.204,0.204,0.204,0.092)$\\
$[4;3,1,1,1,0]$ & $(0.700,0.100,0.100,0.100,0.000)$ & $(0.633,0.122,0.122,0.122,0.000)$ & $[7;4,2,2,1,1]$ & $(0.556,0.167,0.167,0.056,0.056)$ & $(0.517,0.175,0.175,0.067,0.067)$\\
$[4;3,2,2,1,0]$ & $(0.369,0.261,0.261,0.108,0.000)$ & $(0.371,0.258,0.258,0.113,0.000)$ & $[7;5,2,2,1,1]$ & $(0.625,0.137,0.137,0.051,0.051)$ & $(0.582,0.148,0.148,0.061,0.061)$\\
$[4;3,1,1,1,1]$ & $(0.417,0.146,0.146,0.146,0.146)$ & $(0.421,0.145,0.145,0.145,0.145)$ & $[7;4,3,3,1,1]$ & $(0.361,0.269,0.269,0.051,0.051)$ & $(0.354,0.264,0.264,0.059,0.059)$\\
$[4;3,2,2,1,1]$ & $(0.311,0.233,0.233,0.111,0.111)$ & $(0.317,0.231,0.231,0.111,0.111)$ & $[7;4,3,3,2,2]$ & $(0.277,0.208,0.208,0.153,0.153)$ & $(0.279,0.210,0.210,0.151,0.151)$\\
$[4;3,2,1,1,1]$ & $(0.397,0.294,0.103,0.103,0.103)$ & $(0.391,0.283,0.109,0.109,0.109)$ & $[7;5,2,2,2,1]$ & $(0.415,0.170,0.170,0.170,0.075)$ & $(0.415,0.169,0.169,0.169,0.077)$\\
$[4;3,2,2,2,1]$ & $(0.294,0.206,0.206,0.206,0.089)$ & $(0.296,0.204,0.204,0.204,0.092)$ & $[7;5,3,3,2,1]$ & $(0.365,0.219,0.219,0.146,0.052)$ & $(0.363,0.218,0.218,0.146,0.056)$\\
$[4;4,1,1,1,1]$ & $(0.714,0.071,0.071,0.071,0.071)$ & $(0.643,0.089,0.089,0.089,0.089)$ & $[7;5,4,3,2,1]$ & $(0.354,0.305,0.190,0.114,0.037)$ & $(0.350,0.299,0.191,0.117,0.042)$\\
$[4;4,2,2,1,1]$ & $(0.556,0.167,0.167,0.056,0.056)$ & $(0.517,0.175,0.175,0.067,0.067)$ & $[7;3,3,2,2,2]$ & $(0.243,0.243,0.171,0.171,0.171)$ & $(0.245,0.245,0.170,0.170,0.170)$\\
$[4;4,2,1,1,1]$ & $(0.620,0.224,0.052,0.052,0.052)$ & $(0.576,0.223,0.067,0.067,0.067)$ & $[8;3,3,2,1,1]$ & $(0.356,0.356,0.175,0.056,0.056)$ & $(0.342,0.342,0.182,0.067,0.067)$\\
$[4;4,3,1,1,1]$ & $(0.603,0.261,0.045,0.045,0.045)$ & $(0.552,0.271,0.059,0.059,0.059)$ & $[8;3,3,2,2,1]$ & $(0.265,0.265,0.193,0.193,0.084)$ & $(0.267,0.267,0.190,0.190,0.086)$\\
$[4;4,3,2,2,1]$ & $(0.467,0.196,0.140,0.140,0.057)$ & $(0.440,0.207,0.145,0.145,0.062)$ & $[8;4,3,2,2,1]$ & $(0.322,0.239,0.170,0.170,0.099)$ & $(0.324,0.241,0.169,0.169,0.096)$\\
$[4;1,1,1,1,1]$ & $(0.200,0.200,0.200,0.200,0.200)$ & $(0.200,0.200,0.200,0.200,0.200)$ & $[8;5,3,2,1,1]$ & $(0.548,0.258,0.123,0.035,0.035)$ & $(0.522,0.257,0.132,0.045,0.045)$\\
$[4;2,2,2,1,1]$ & $(0.267,0.267,0.267,0.100,0.100)$ & $(0.261,0.261,0.261,0.108,0.108)$ & $[8;4,3,3,2,1]$ & $(0.302,0.244,0.244,0.152,0.059)$ & $(0.303,0.241,0.241,0.152,0.062)$\\
$[4;3,3,1,1,1]$ & $(0.393,0.393,0.071,0.071,0.071)$ & $(0.373,0.373,0.085,0.085,0.085)$ & $[8;5,3,3,2,1]$ & $(0.365,0.219,0.219,0.146,0.052)$ & $(0.363,0.218,0.218,0.146,0.056)$\\
$[4;3,3,2,2,1]$ & $(0.265,0.265,0.193,0.193,0.084)$ & $(0.267,0.267,0.190,0.190,0.086)$ & $[8;5,3,3,1,1]$ & $(0.394,0.258,0.258,0.045,0.045)$ & $(0.392,0.251,0.251,0.053,0.053)$\\
$[5;1,1,1,1,1]$ & $(0.200,0.200,0.200,0.200,0.200)$ & $(0.200,0.200,0.200,0.200,0.200)$ & $[8;5,4,3,2,2]$ & $(0.347,0.294,0.149,0.105,0.105)$ & $(0.340,0.285,0.156,0.109,0.109)$\\
$[5;2,2,1,1,0]$ & $(0.375,0.375,0.125,0.125,0.000)$ & $(0.361,0.361,0.139,0.139,0.000)$ & $[8;4,3,3,2,2]$ & $(0.277,0.208,0.208,0.153,0.153)$ & $(0.279,0.210,0.210,0.151,0.151)$\\
$[5;2,2,1,1,1]$ & $(0.267,0.267,0.156,0.156,0.156)$ & $(0.273,0.273,0.151,0.151,0.151)$ & $[9;4,3,2,2,1]$ & $(0.467,0.196,0.140,0.140,0.057)$ & $(0.440,0.207,0.145,0.145,0.062)$\\
$[5;3,2,1,1,0]$ & $(0.602,0.249,0.075,0.075,0.000)$ & $(0.558,0.258,0.092,0.092,0.000)$ & $[9;5,3,2,2,1]$ & $(0.514,0.180,0.126,0.126,0.054)$ & $(0.488,0.190,0.132,0.132,0.058)$\\
$[5;3,2,1,1,1]$ & $(0.397,0.294,0.103,0.103,0.103)$ & $(0.391,0.283,0.109,0.109,0.109)$ & $[9;5,4,2,2,1]$ & $(0.384,0.321,0.123,0.123,0.049)$ & $(0.379,0.314,0.127,0.127,0.054)$\\
$[5;2,1,1,1,1]$ & $(0.600,0.100,0.100,0.100,0.100)$ & $(0.511,0.122,0.122,0.122,0.122)$ & $[9;5,4,3,2,1]$ & $(0.354,0.305,0.190,0.114,0.037)$ & $(0.350,0.299,0.191,0.117,0.042)$\\
$[5;3,2,2,1,1]$ & $(0.361,0.222,0.222,0.097,0.097)$ & $(0.354,0.222,0.222,0.101,0.101)$ & $[9;5,4,3,2,2]$ & $(0.347,0.294,0.149,0.105,0.105)$ & $(0.340,0.285,0.156,0.109,0.109)$\\
$[5;3,2,2,1,0]$ & $(0.369,0.261,0.261,0.108,0.000)$ & $(0.371,0.258,0.258,0.113,0.000)$ & & &\\\hline\hline
\end{longtable}
\end{footnotesize}

\end{landscape}

\end{appendix}

\end{document}